\newcommand{\rmi}{\mathrm{i}}
\newcommand{\der}{\mathrm{d}}
\newcommand{\End}[1]{\mathrm{End} ({#1})}
\newcommand{\R}{\mathbb{R}}
\newcommand{\C}{\mathbb{C}}
\newcommand{\one}{\mathbbm{1}}
\newcommand{\supp}[1]{\mathrm{supp} \left({#1}\right)}
\renewcommand{\Box}{P}
\newcommand{\WF}[1]{\mathrm{WF} ({#1})}
\newcommand{\WFPrime}[1]{\mathrm{WF}^{\prime} ({#1})}
\newcommand{\Feyn}{\mathrm{F}}
\newcommand{\aFeyn}{\mathrm{aF}}
\newcommand{\ret}{\mathrm{ret}}
\newcommand{\adv}{\mathrm{adv}}
\newcommand{\secE}{C^{\infty} (M; E)}
\newcommand{\ms}{\scriptscriptstyle}
\newcommand{\re}{\mathrm{e}}
\newcommand{\xxi}{(x, \xi)}
\newcommand{\cA}{\mathcal{A}}
\newcommand{\cC}{\mathcal{C}}
\newcommand{\cD}{\mathcal{D}}
\newcommand{\cF}{\mathcal{F}}
\newcommand{\cM}{\mathcal{M}}
\newcommand{\cR}{\mathcal{R}}
\newcommand{\cU}{\mathcal{U}}
\newcommand{\cV}{\mathcal{V}}
\newcommand{\symb}[1]{\sigma_{\ms{#1}}}
\newtheorem{theorem}{Theorem}
\newtheorem{lemma}[theorem]{Lemma}
\newtheorem{corollary}[theorem]{Corollary}
\theoremstyle{definition}
\newcommand{\sadj}[1]{\leftidx{^{\star}}{#1}{}}
\newcommand{\dlangle}{\langle\!\langle}
\newcommand{\drangle}{\rangle\!\rangle}
\newcommand{\PJ}{\mathrm{PJ}}
\newcommand{\id}{\mathrm{id}}
\title[Feynman Propagators]{On the construction of Hadamard states\\ from Feynman propagators}
\author[C.J.~Fewster]{Christopher J.~Fewster}
\address{Department of Mathematics, Ian Wand Building, Deramore Lane,
University of York,\\ York YO10 5GH, UK.}
\email{chris.fewster@york.ac.uk}
\author[A.~Strohmaier]{Alexander Strohmaier}
\address{Institut f\"{u}r Analysis, Leibniz Universit\"{a}t Hannover, Welfengarten 1, 30167 Hannover, Germany.}
\email{a.strohmaier@math.uni-hannover.de}
\subjclass[2010]{}
\keywords{Feynman propagators, Hadamard states, distinguished parametrices, positive kernels}
\date{2 October 2025. Revised, 20 April 2026.}
\begin{document}
\begin{abstract}
The Wightman two-point function of any Hadamard state of a linear quantum field theory determines a corresponding Feynman propagator.
Conversely, however, a Feynman propagator determines a state only if certain positivity conditions are fulfilled. Choosing a Feynman propagator to satisfy the correct positivity conditions involves a slightly subtle point that we address and resolve. Starting from a recent
generalisation of the Duistermaat--H\"ormander theory of distinguished parametrices to normally hyperbolic and Dirac-type operators acting on sections of hermitian vector bundles, we complete this work by showing how Feynman propagators can be chosen so as to define Hadamard states.
The theories considered are: the complex bosonic field governed by a normally hyperbolic operator; the corresponding hermitian theory if the operator commutes with a complex conjugation; the Dirac fermionic theory governed by a Dirac-type operator, and the corresponding Majorana theory in the case where the operator commutes with a skew complex conjugation. The additional key ingredients that we supply are simple domination properties of self-adjoint smooth kernels.
\end{abstract}
\maketitle

\section{Introduction and statement of main results}

The existence of Hadamard states for bundle-valued quantum fields on curved spacetimes is -- as first realised by Radzikowski~\cite{Radzikowski_CMP_1996} -- deeply connected to the theory of distinguished parametrices originally developed by Duistermaat and H\"ormander~\cite{DuiHoer_FIOii:1972}. A recent work~\cite{IslamStrohmaier:2020} has generalised this theory to the vector-bundle context and shown how Feynman propagators can be constructed by microlocalization methods for theories governed by normally hyperbolic and Dirac type operators obeying suitable conditions. Here, we complete that study by showing how Hadamard states of these theories can be constructed using the main results in~\cite{IslamStrohmaier:2020}, which require some additional generalisations for this task. General references on Hadamard states can be found in~\cite{IslamStrohmaier:2020} and also in~\cite{Fewster:2025a}, which describes their generalisation beyond the classes of normally hyperbolic and Dirac type operators, in which such operators still appear frequently as auxiliary objects or particular cases.

Suppose that $(M,g)$ is an $n \geq 2$-dimensional globally hyperbolic spacetime, i.e., an oriented and time-oriented Lorentzian manifold $M$ of metric $g$-signature $(+, -, \ldots, -)$ that possesses a smooth spacelike Cauchy hypersurface.  
Let $E \to M$ be a smooth complex vector bundle  equipped with a nondegenerate but not necessarily positive definite hermitian inner product on its fibres.
A linear differential operator $\Box$ acting on smooth sections $\secE$ of $E$ is called \textit{normally hyperbolic} (d'Alembert-type) if its principal symbol $\symb{\Box}$ is scalar and equals $-g^{-1}$ on the cotangent bundle $T^{*} M$ over $M$, i.e., 
\begin{equation*}
	\symb{\Box} \xxi := -g_{x}^{-1} (\xi, \xi) \, \one_{\End{E}}   
\end{equation*} 
for any $\xxi \in  T^{*} M$.  Such operators are of the form $\square_{\nabla} + V$, where
$\square_{\nabla}$ is the connection d'Alembert operator with respect to some connection on $E$ and $V \in C^\infty(M;\mathrm{End}(E))$ is a smooth potential. The prototype is the Minkowski operator $\Box = \partial^2/\partial t^2 - \triangle + V$ in inertial coordinates. 

Under these assumptions, $\Box$ has a variety of fundamental solutions, also called propagators, i.e., continuous inverses of $\Box$ between various function spaces. In particular, 
$\Box$ has unique retarded and advanced fundamental solutions
$$
G_{\ret} : C^\infty_0(M;E) \to C^\infty(M;E), \quad G_{\adv} : C^\infty_0(M;E) \to C^\infty(M;E)
$$
that map forward and backward in time in the sense that
$$
\mathrm{supp}(G_{\ret/\adv} f) \subset J^{\pm}( \mathrm{supp}( f) ).
$$
However, $\Box$ also admits (non-unique) Feynman and anti-Feynman propagators that satisfy specific conditions on the wave-front set of their distributional kernels. Here, we recall that any continuous
map $C^\infty_0(M;E) \to C^\infty(M;E)$ has a unique integral kernel in $\mathcal{D}'(M\times M; E \boxtimes E^*)$ by the Schwartz kernel theorem. To minimise notation, here and throughout, we use the same symbol for a map and its integral kernel and also use the metric volume density to identify functions and densities.

The fundamental solutions play important roles in the theory of a complex bosonic linear quantum field based on the equation $\Box \phi=0$, where we now assume that $\Box$ is formally self-adjoint. In particular, the operator $G=G_\adv-G_\ret$ determines the canonical commutation relations of the theory. Meanwhile, given a Hadamard state, the expectation value of the smeared field products $\Phi(u)\Phi^\star(f)$ determines a two-point function $W_{\Phi\Phi^\star}\in\mathcal{D}'(M\times M;E\boxtimes E^*)$, from which a Feynman propagator may be obtained by the formula
$$
G_{\Feyn} = \rmi W_{\Phi\Phi^\star} + G_\adv.
$$
Turning this around, one may attempt to construct a Hadamard state from a Feynman propagator by solving the above formula for $W_{\Phi\Phi^\star}$. As the probability interpretation of quantum theory requires that $W_{\Phi\Phi^\star}$ should have positive type, a necessary condition on $G_\Feyn$ for this construction to succeed is that 
\begin{equation}\label{eq:pos1}
-\rmi (G_\Feyn-G_\adv) \ge 0.
\end{equation}
That this condition can be satisfied for scalar parametrices (i.e., mod $C^\infty$) is a result of Duistermaat and H\"ormander \cite{DuiHoer_FIOii:1972}, and it can be improved to hold exactly for propagators as for example shown in \cite{IslamStrohmaier:2020}, which also generalises the results to the bundle setting and streamlines the argument. However, this is not the whole story: there is a second two-point function $W_{\Phi^\star\Phi}\in\mathcal{D}'(M\times M;E^*\boxtimes E)$, determined by field products $\Phi^\star(f)\Phi(u)$, and which is related to $W_{\Phi\Phi^\star}$ by the canonical commutation relations,
$$
W_{\Phi\Phi^\star}(u\otimes f) -W_{\Phi^\star\Phi}(f\otimes u)=\rmi G(u\otimes f).
$$
Consequently, $W_{\Phi^\star\Phi}=-\rmi (G_\Feyn-G_\ret)$. As $W_{\Phi^\star\Phi}$ also has positive type, we find a second positivity condition
\begin{equation}
	\label{eq:pos2}
-\rmi (G_\Feyn-G_\ret) \ge 0.
\end{equation}
If both positivity conditions are satisfied simultaneously, then one may indeed construct a quasifree Hadamard state from $G_\Feyn$.
The question we address and resolve in this short paper is whether any Feynman propagator obeying the first positivity condition may be modified so as to produce a Feynman propagator satisfying both conditions at once. The key to this is a simple domination lemma for smooth kernels that will be proved in Section~\ref{sec:domino} and may be of independent interest.

Assuming that the fibre metric of $E$ is positive definite, our result  immediately provides a proof of the existence of Hadamard states for the complex quantum field theory (QFT) determined by a formally self-adjoint normally hyperbolic operator $\Box$, using the recent results of~\cite{IslamStrohmaier:2020}, which streamlines and generalises to the bundle case the classic work of Duistermaat and H\"ormander~\cite{DuiHoer_FIOii:1972} on the existence of Feynman propagators obeying positivity condition~\eqref{eq:pos1}. In this way the present paper completes the work of~\cite{IslamStrohmaier:2020}. 

In the case where $\Box$ commutes with an antilinear conjugation on $E$, one may define an associated hermitian bosonic QFT as well as the complex theory described above. To obtain a Hadamard state one requires not only the positivity condition~\eqref{eq:pos1} but also that $G_\Feyn$ obeys a symmetry condition with respect to a bilinear form on the sections of $E$. Again, we show that these conditions can be satisfied.
This is stronger than various other positivity conditions considered for example in  \cites{MR3611021,Lewandowski_JMP_2022}.
To construct a Hadamard state the positivity conditions must be verified on all complex-valued sections, even in the case of a hermitian theory. This point was overlooked in \cite{Lewandowski_JMP_2022} and the present paper can be regarded as bridging the gap.

The natural analogue of normally hyperbolic operators for fermionic theories is the class of Dirac-type operators, or, where a compatible skew conjugation is available, a class of Majorana-type operators introduced here. 
Building on~\cite{IslamStrohmaier:2020}, we again establish that Feynman propagators may be modified so as to define Hadamard states, provided the Dirac-type or Majorana-type operator is of definite type.
Here, the domination lemma is used in conjunction with a further property of properly supported smooth kernels.

In the rest of this introduction, we explain the general setting and our main results for normally hyperbolic operators in greater detail, also establishing some notation and terminology.  
Propagators of $\Box$ are partly classified by the wavefront sets of their integral kernels, using the Fourier transform convention  $\hat{f}(k)=\int \re^{-\rmi k\cdot x}f(x) \der x$ with $\cdot$ indicating the Euclidean inner product.
In fact, it is convenient to consider not their wavefront sets 
$\WF K$ but instead the primed wavefront set 
$$
\WFPrime{K} = \{(x,\xi,x',\xi')  \mid (x,\xi,x',-\xi') \in \WF K\}.
$$ 
We will refer to this as the wavefront relation. Continuous inverses that are only inverses modulo smoothing operators will as usual be called parametrices. 

The wavefront sets distinguish parametrices uniquely in the sense that all parametrices with given wavefront set relation coincide up to the addition of a smoothing operator. In particular, the retarded and advanced fundamental solutions have wavefront relations
$$
\WFPrime{G_{\ret/\adv}} = \WFPrime{\id} \cup C_{\ret/\adv}, 
$$
where 
\begin{align*}
	\WFPrime{\id} 
	&:= 
	\{(x,\xi;x,\xi) \in \dot T^*(M \times M) \mid (x,\xi)\in\dot T^*M\},\notag\\
	C_{\ret}
	&:=
	\{(x,\xi;x',\xi') \in \dot T^*(M \times M) \mid \xi \in \dot T^*_0 M, \; x \in J^+(x'),\; \exists\, t \in \R:  (x,\xi)=\Phi_t(x',\xi') \}, \notag\\
	C_{\adv}
	&:=
	\{(x,\xi;x',\xi') \in \dot T^*(M \times M) \mid \xi \in \dot T^*_0 M,\; x \in J^-(x'),\; \exists\, t \in \R:  (x,\xi)=\Phi_t(x',\xi')  \} \notag
\end{align*}
and other symbols are defined below.
Meanwhile, Feynman parametrices $G_\Feyn$ and anti-Feynman parametrices $G_\aFeyn$ are uniquely determined as parametrices (modulo smoothing operators) by their wavefront relation
$$
\WFPrime{G_{\Feyn/\aFeyn}} = \WFPrime{\id} \cup C_{\Feyn/\aFeyn}, 
$$
where
\begin{align*}
	C_\Feyn
	&:=
	\{(x,\xi;x',\xi') \in \dot T^*(M \times M) \mid \xi \textrm{ is lightlike,}\; \exists\, t>0: (x,\xi) =\Phi_{-t}  (x',\xi') \}, \\
	C_\aFeyn
	&:=
	\{(x,\xi;x',\xi') \in \dot T^*(M \times M) \mid \xi \textrm{ is lightlike,}\; \exists\, t>0: (x,\xi) = \Phi_{t} (x',\xi') \},
\end{align*}
and a Feynman/anti-Feynman propagator $G_{\Feyn/\aFeyn}$ is a fundamental solution
$$
G_\Feyn : C^\infty_0(M;E) \to C^\infty(M;E)
$$
of $\Box$ that has the wavefront relation $\WFPrime{G_{\Feyn/\aFeyn}}$. 

In the above formulae, $\dot T^*_0 M$ denotes the set of non-zero null-covectors, while $\Phi_t$ denotes the time $t$ geodesic flow. 
The null covectors appear here because $\dot T^*_0 M$ is the characteristic set of $\Box$.
It decomposes into a disconnected union of future and past light-cones $\dot T^*_0 M = \dot T^*_{0,+} M \cup \dot T^*_{0,-} M $, where a null covector is regarded as future-pointing if
it has positive contraction with any future-directed timelike vector. 
Given our signature convention, covectors in $T^*_{0,+}M$ correspond to future-pointing null vectors under the musical isomorphism determined by the metric (see Appendix~\ref{appx:notation} for further comments).
We use the common notation that
$\dot T^* M$ is the cotangent bundle with its zero section removed.

Any fundamental solution necessarily has to contain the canonical relation of the identity map $\WFPrime{\id}$. One way to understand the wavefront set relations precisely is to use the fact that the operator $\Box$ near any point of its characteristic set $(x,\xi)$ is microlocally equivalent to the operator $\rmi \partial_n =\rmi  \frac{\partial}{\partial x_n}$ on the space $C^\infty_0(\R^n,\C^N)$ where $N$ is the rank of $E$. The latter operator has two distinguished fundamental solutions, forward and backward ones.
A parametrix for $\Box$ can then be constructed as follows. One microlocally conjugates the operator $\Box$ near points of its characteristic set to $\rmi \partial_n$ using Fourier integral operators. This conjugation is used to construct a microlocal parametrix of $\Box$. These microlocal parametrices are patched together to obtain a full parametrix using a microlocal partition of unity.
It is clear that this can be done by picking forward or backward fundamental solutions for  $\rmi \partial_n$ on each connected component of the characteristic set of $\Box$. In case of the two components of the light cone (in case $n\geq 3$) this gives us four possibilities for the construction of such distinguished parametrices. For example, the retarded fundamental solution is constructed by choosing the forward fundamental solution for $\rmi \partial_n$ in the forward light cone, but the backward fundamental solution in the backward light cone. This results in all singularities propagating to the future, thus creating the wavefront relation $C_{\ret}$.  Similarly, by choosing the backward fundamental solution on the forward light cone and the forward solution on the backward light cone, one obtains $C_{\adv}$.
The other canonical choice is to pick the forward/backward
fundamental solutions throughout resulting in the anti-Feynman and Feynman parametrices. 
 
  Now write $G\sim G'$ if $G-G'$ is a smoothing operator.
Assuming that we are given parametrices $G_\ret,G_\adv, G_\Feyn,G_\aFeyn$ as above it is then clear from the construction that 
\begin{align*}
 G_\Feyn &\sim T_{--} + T_{-+},&
 G_\aFeyn &\sim T_{++} + T_{+-},\\
 G_\ret &\sim T_{++} + T_{--},&
 G_\adv &\sim T_{-+} + T_{+-},
\end{align*}
where $T_{++}$ has been constructed from the forward fundamental solution on the future light cone, $T_{+-}$ from the forward fundamental solution on the past light cone, $T_{-+}$ from the backward fundamental solution on the future light cone, and
$T_{--}$ from the backward fundamental solution on the past light cone.
Thus, for $\sigma,\sigma'\in\{+,-\}$, 
$$
\WFPrime{T_{\sigma\sigma'}}\subset \WFPrime{\id}\cup 
\{(x,\xi;x',\xi') \in \dot T^*(M \times M) \mid \xi' \in \dot{T}^*_{0,\sigma'}\; \exists\, t>0: (x,\xi) = \Phi_{ \sigma t}(x',\xi') \}.
$$

One sees from this that $G_\Feyn + G_\aFeyn \sim G_\ret + G_\adv$ for example. More importantly,
$$
G_\adv - G_\ret  = 
(G_\Feyn-G_\ret)- (G_\Feyn-G_\adv) \sim (T_{-+}-T_{++} )  - (T_{--} - T_{+-})
$$
achieves a splitting of the operator $G =  G_\adv - G_\ret$, with wavefront relation in  
$$
C = \{(x,\xi;x',\xi') \in \dot T^*(M \times M) \mid \xi \in T^*_0 M ; \exists\, t \in \R: (x,\xi) = \Phi_{ t}(x',\xi') \},
$$
into operators $T_{-\pm} -T_{+\pm} \sim G_\Feyn-G_{\ret/\adv}$ with wavefront relation in 
$$
C_\pm = \{(x,\xi;x',\xi') \in \dot T^*(M \times M) \mid \xi \in T^*_{0,\pm} M ; \exists\, t \in \R: (x,\xi) =\Phi_{ t} (x',\xi') \}.
$$
Since $C = C_+ \cup C_-$ and $C_+\cap C_- = \emptyset$ such a splitting is unique modulo smooth kernels. By definition, a state of the complex QFT determined by $\Box$ is Hadamard if and only if its two-point functions satisfy
\begin{equation}\label{eq:Hadcov}
\WFPrime{W_{\Phi\Phi^\star}} = \WFPrime{W_{\Phi^\star\Phi}} = C_-,
\end{equation}
so the previous observation on the uniqueness of the splitting corresponds to the fact that all Hadamard two-point functions share a common singular structure.

Returning to Feynman propagators, a
key result, by Duistermaat and Hörmander in the scalar case~\cite{DuiHoer_FIOii:1972}, and Islam and Strohmaier~\cite{IslamStrohmaier:2020} in the case of hermitian vector bundles, is that they can be chosen so that additional self-adjointness and positivity properties are preserved.
\begin{theorem}[Duistermaat--Hörmander~\cite{DuiHoer_FIOii:1972}, Islam--Strohmaier~\cite{IslamStrohmaier:2020}] \label{thm:DHIS}
 Assume that $\Box$ is formally self-adjoint  and the bundle inner product of $E$ is positive definite. Then there exists a Feynman propagator $G_\Feyn$ 
 such that
 \begin{align*}
   -\rmi (G_\Feyn -G_\adv) \geq 0.
 \end{align*}
\end{theorem}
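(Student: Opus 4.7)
The plan is to combine the microlocal construction of Feynman parametrices sketched in the introduction with a final correction by a smooth hermitian bi-solution. First I would construct a Feynman parametrix $G_\Feyn^{(0)}$ by microlocally conjugating $\Box$ to $\rmi \partial_n$ near each point of the characteristic set $\coLightBun$ via Fourier integral operators, taking the backward fundamental solution of $\rmi \partial_n$ on both connected components of the null cone, and patching everything together with a microlocal partition of unity. Care is taken in selecting the FIO symbols so that the principal symbol of $W^{(0)} := -\rmi (G_\Feyn^{(0)} - G_\adv)$ on the future lightcone $C_-$ is a positive section of $\End{E}$ with respect to the positive definite bundle metric; by construction $\WFPrime{W^{(0)}} \subset C_-$.

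Next I would symmetrise and promote the positivity to hold modulo smoothing. Formal self-adjointness of $\Box$ gives $G_\adv^{*} = G_\ret$, and since $G_\Feyn^{(0)} + (G_\Feyn^{(0)})^{*} \sim G_\Feyn^{(0)} + G_\aFeyn^{(0)} \sim G_\adv + G_\ret$ the candidate $W^{(0)}$ is automatically hermitian modulo smoothing; a smoothing symmetrisation makes it exactly hermitian without altering the wavefront relation. A standard transport/symbol iteration along the null bicharacteristic flow, carried out inside the open cone of positive endomorphism-valued symbols, then successively kills subprincipal obstructions, and Borel summation yields a parametrix such that $W^{(0)}$ is of positive type modulo a hermitian smoothing operator $R$.

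The main obstacle is the final upgrade from ``positive modulo smoothing'' to exact positivity. Any two Feynman parametrices differ by a smooth hermitian bi-solution of $\Box$, so one is free to replace $G_\Feyn^{(0)}$ by $G_\Feyn^{(0)} + \rmi K$ for any such smooth bi-solution $K$ without affecting the Feynman wavefront relation or the above symbolic construction; the problem reduces to finding $K$ with $K \ge -R$ as a sesquilinear form. This is purely a question in the smooth category: dominate a given self-adjoint smooth kernel by a positive smooth one. The natural route is to pass to Cauchy data on a smooth spacelike Cauchy surface $\Sigma$, where the positive definite fibre metric induces a genuine $L^2$-inner product and where any self-adjoint smoothing operator can be dominated, for instance by a sufficiently large multiple of a positive smoothing kernel built from the spectral resolution of an elliptic Laplacian on $\Sigma$; the Cauchy problem for $\Box$ then transports this dominating kernel to a smooth hermitian bi-solution $K$ on $M \times M$ with the required property. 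A small additional smoothing correction using $G_\adv$ to kill the residual error $\id - \Box(G_\Feyn^{(0)} + \rmi K)$ promotes the resulting parametrix to a genuine Feynman propagator without destroying positivity. The technical heart of the argument, which the domination lemma of this paper is designed to formalise, is precisely this last smooth-kernel domination step.
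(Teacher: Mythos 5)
Your overall architecture (microlocal conjugation to $\rmi\partial_n$, backward fundamental solutions on both halves of the light cone, then a correction to exact positivity and to an exact inverse) is the right family of ideas, but it misses the structural point that makes the paper's proof of this particular theorem work, and the final step as you state it fails. The paper obtains \emph{exact} positivity of the parametrix difference for free: writing $G_\Feyn \sim T_{--}+T_{-+}$ and $G_\adv \sim T_{-+}+T_{+-}$ with $T_{\sigma\pm}=\sum_\alpha \Psi_\alpha^\pm A_\alpha^\pm (F_\sigma\otimes\one)(A_\alpha^\pm)^*(\Psi_\alpha^\pm)^*$, one has $-\rmi(G_\Feyn-G_\adv)=-\rmi(T_{--}-T_{+-})=\sum_\alpha B_\alpha\bigl(-\rmi(F_--F_+)\otimes\one\bigr)B_\alpha^*$, and $-\rmi(F_--F_+)$ has kernel $\delta(y-y')$, hence is positive when the fibre metric is positive definite. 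So each microlocal piece is a sandwich $BPB^*$ with $P\ge 0$, and no symbol/transport iteration, no Borel summation, and no domination lemma are needed for this theorem. (The domination lemma enters only for Theorem~\ref{thm:DHISmod}, where the \emph{second} positivity condition is not automatic.) Your proposed route to ``positive modulo smoothing'' via killing subprincipal obstructions ``inside the cone of positive symbols'' is also not a proof as it stands: positivity of a kernel is not a symbolic condition, and eliminating lower-order symbol terms does not by itself yield positivity modulo $C^\infty$; one needs the factorised $BPB^*$ form (which is essentially what Duistermaat--H\"ormander and Islam--Strohmaier arrange).

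The concrete gap is the last step. Correcting the parametrix by $G\mapsto G - G_\adv(\Box G-\id)$ does \emph{not} preserve positivity: it changes $-\rmi(G_\Feyn-G_\adv)$ by $-\rmi G_\adv S$ for a smoothing $S$, and this term is in general not even self-adjoint (its adjoint involves $G_\ret$), let alone non-negative, so the claim ``without destroying positivity'' is unjustified. The paper's mechanism is different: restrict the kernel of $G_\Feyn-G_\adv$ (all Cauchy data components) to $\Sigma\times\Sigma$, extend it to an exact bisolution $F$ by solving the Cauchy problem, and set the new $G_\Feyn:=F+G_\adv$; this changes the parametrix only by a smoothing operator and preserves positivity by Lemma~3.2 of~\cite{IslamStrohmaier:2020}, because positivity of a bisolution is detected by its Cauchy data. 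You would need to replace your $G_\adv$-correction by this restriction--extension argument (or an equivalent positivity-preserving device) for the proof to close.
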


It is also stated in \cite{IslamStrohmaier:2020} that $-\rmi (G_\Feyn -G_\adv)$ defines a Hadamard state, but without any further explanation. While this bidistribution is of positive type and satisfies the equation of motion, for it to be a state on the $*$-algebra obtained by canonical quantisation of the complex or real scalar field further conditions need to be satisfied.  
In fact, as will be described in more detail later, a gauge-invariant quasifree 
state of the complex theory is determined by two distributions $W_{\Phi\Phi^\star}\in\cD'(M\times M;E\boxtimes E^*)$ and 
$W_{\Phi^\star\Phi}\in\cD'(M\times M;E^*\boxtimes E)$ obeying
\begin{align}\label{eq:covariances}
	(\Box\otimes 1)W_{\Phi\Phi^\star} &= (1\otimes \sadj{\Box})W_{\Phi\Phi^\star} =0 \nonumber \\
	(1\otimes \Box)W_{\Phi^\star\Phi} &= (\sadj{\Box}\otimes 1)W_{\Phi^\star\Phi} =0 \nonumber \\
	W_{\Phi\Phi^\star}(u\otimes f) - W_{\Phi^\star\Phi}(f\otimes u) &= \rmi\dlangle u, Gf\drangle  \\
	W_{\Phi\Phi^\star}(f^\star\otimes f)& \ge 0 \nonumber \\ 
	W_{\Phi^\star\Phi}(f\otimes f^\star)&\ge 0\nonumber
\end{align}
for all test-sections $f\in C_0^\infty(M;E)$ and $u\in C_0^\infty(M;E^*)$. Here, $\sadj{\Box}$, acting on $C^\infty(M;E^*)$, is the formally dual operator to $\Box$, 
and $f\mapsto f^\star$ is the conjugate-linear isomorphism of $E$ and $E^*$ determined fibrewise by
$\dlangle f^\star,h\drangle_x = \langle f,h\rangle_x$, in which the single and double brackets indicate the hermitian and duality pairings at $x$. All but the last of the requirements in~\eqref{eq:covariances} are easily seen to follow from Theorem~\ref{thm:DHIS} if one sets
\begin{equation}\label{eq:covdefs}
	W_{\Phi\Phi^\star}(u\otimes f)=-\rmi\langle u^\star,(G_\Feyn-G_\adv) f\rangle, \qquad 
	W_{\Phi^\star\Phi}(f\otimes u)=-\rmi\langle u^\star,(G_\Feyn-G_\ret) f\rangle,
\end{equation}
for $f\in C_0^\infty(M;E)$, $u\in C_0^\infty(M;E^*)$. Thus, to obtain a state one must also require simultaneously that $-\rmi (G_\Feyn -G_\ret) \geq 0$.
This inequality is however not an automatic consequence of
Theorem~\ref{thm:DHIS}.

In this paper, we show that a relatively simple lemma about the domination of self-adjoint smooth kernels by positive smooth kernels (which may be interesting in its own right) can be used to show that in fact both positivity relations can consistently be enforced.
\begin{theorem}\label{thm:DHISmod}
 Assume that $\Box$ is formally self-adjoint and the bundle inner product of $E$ is positive definite. Then there exists a Feynman propagator $G_\Feyn$ 
 such that
 \begin{align*}
   -\rmi (G_\Feyn -G_\ret) \geq 0,\\
  -\rmi (G_\Feyn -G_\adv) \geq 0.
 \end{align*}
\end{theorem}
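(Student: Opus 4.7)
Starting from the Feynman propagator $G_\Feyn^{(1)}$ supplied by Theorem~\ref{thm:DHIS}, for which $W_1^{(1)} := -\rmi(G_\Feyn^{(1)} - G_\adv) \ge 0$, my strategy is to add a smooth correction of the form $\rmi P$, with $P$ a smooth positive bi-solution of $\Box$. Any such modification leaves $G_\Feyn$ a Feynman propagator (since $P$ is smooth and $\Box P = 0$ in both arguments), and it shifts both candidate two-point functions $-\rmi(G_\Feyn - G_\adv)$ and $-\rmi(G_\Feyn - G_\ret)$ by the same amount $P$. The task is thus to locate the minimal correction needed.

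I would first produce a second Feynman propagator $G_\Feyn^{(2)}$ satisfying the partner inequality $W_2^{(2)} := -\rmi(G_\Feyn^{(2)} - G_\ret) \ge 0$, by an argument entirely analogous to Theorem~\ref{thm:DHIS} -- essentially running the microlocal construction of~\cite{IslamStrohmaier:2020} with the roles of $G_\adv$ and $G_\ret$ interchanged. Because any two Feynman propagators differ by a smoothing operator annihilated by $\Box$ in both arguments, the kernel $R := G_\Feyn^{(1)} - G_\Feyn^{(2)}$ is a smooth bi-solution of $\Box$. A short computation gives
\begin{equation*}
W_2^{(1)} := -\rmi(G_\Feyn^{(1)} - G_\ret) = W_2^{(2)} - \rmi R,
\end{equation*}
and since $W_2^{(1)} = W_1^{(1)} - \rmi G$ is Hermitian (both $W_1^{(1)}$ and $\rmi G$ are, the latter being a consequence of the formal self-adjointness of $\Box$), the bi-solution $\rmi R = W_2^{(2)} - W_2^{(1)}$ is a smooth Hermitian kernel.

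The crux of the argument is now the domination lemma from Section~\ref{sec:domino}, which supplies a smooth positive bi-solution $P$ of $\Box$ with $P \ge \rmi R$ and $P \ge 0$ as Hermitian forms on $C_0^\infty(M;E)$. Setting $G_\Feyn := G_\Feyn^{(1)} + \rmi P$, one computes
\begin{align*}
-\rmi(G_\Feyn - G_\adv) &= W_1^{(1)} + P \ge 0, \\
-\rmi(G_\Feyn - G_\ret) &= W_2^{(1)} + P = W_2^{(2)} + (P - \rmi R) \ge 0,
\end{align*}
which is the desired conclusion.

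\noindent\textbf{Main obstacle.} The substantive content of the proof lies in the domination lemma. While dominating a smooth Hermitian kernel on $M$ by a smooth positive kernel is elementary via partition of unity, the dominator here must additionally be a bi-solution of $\Box$, so that the correction $\rmi P$ preserves the fundamental-solution property of $G_\Feyn$. I expect this extra requirement to be handled by reducing to a smooth Cauchy hypersurface $\Sigma$, dominating the Cauchy data of $\rmi R$ by smooth positive Cauchy data there (a problem that succumbs to pointwise bounds and a partition of unity on $\Sigma$), and extending back to $M$ via the Cauchy problem in a way that preserves positivity.
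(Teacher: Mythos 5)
Your route is genuinely different from the paper's and, with the flagged gap filled in, it would work; but the gap you identify is real and is precisely the point where the paper's ordering of steps is more economical.

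Your plan is: stay at the level of Green operators throughout, produce two propagators $G_\Feyn^{(1)},G_\Feyn^{(2)}$ adapted to the two inequalities, set $\rmi R$ equal to their (smooth, Hermitian) bi-solution difference, and then dominate $\rmi R$ by a \emph{positive smooth bi-solution} $P$. The algebra you display is correct, and the construction of $G_\Feyn^{(2)}$ by interchanging the roles of $G_\adv$ and $G_\ret$ is indeed available (in the paper's notation, $-\rmi(T_{-+}-T_{++})\ge 0$ holds by the same reasoning as $-\rmi(T_{--}-T_{+-})\ge 0$). The crux, as you say, is that the dominating kernel must be a bi-solution, which Lemma~\ref{lem:domination} does not supply. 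Two cautions about your sketch of how to fill this in. First, ``pointwise bounds and a partition of unity'' is not the right description of the domination: positivity of a kernel is a global property, and even on $\Sigma$ the correct tool is Lemma~\ref{lem:domination} itself (whose proof rests on Grothendieck's projective-tensor-product decomposition, not pointwise estimates). Second, since $\Box$ is second order, the Cauchy data of a bi-solution on $\Sigma\times\Sigma$ is a $2\times 2$ block kernel over $\Sigma$ (Dirichlet and Neumann data in each slot), and it is the positivity of this block form that corresponds to positivity of the bi-solution on $M\times M$; any domination on the Cauchy surface has to be carried out for this block kernel, and one then invokes the positivity-preserving property of the Cauchy extension (\cite{IslamStrohmaier:2020}*{Lemma 3.2}).

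The paper avoids the need for a bi-solution domination lemma by reversing the order of operations. It first works with the Duistermaat--H\"ormander parametrices $T_{\sigma\sigma'}$, where the difference between $-\rmi(G_\Feyn-G_\ret)$ and the manifestly positive $-\rmi(T_{-+}-T_{++})$ is a smooth self-adjoint kernel $k$ with \emph{no} bi-solution constraint; Lemma~\ref{lem:domination} then yields a smooth positive $K\ge k$, and replacing $G_\Feyn$ by $G_\Feyn+\rmi K$ gives both positivity conditions as parametrices. Only at the very end is the parametrix promoted to a genuine propagator, by restricting $G_\Feyn-G_\adv$ to $\Sigma\times\Sigma$ and solving the Cauchy problem; this step preserves positivity (\cite{IslamStrohmaier:2020}*{Lemma 3.2}) and produces the \emph{same} correction to $G_\Feyn-G_\ret$ because $G_\ret-G_\adv$ is already a strict bi-solution. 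So both approaches ultimately use the same two ingredients (the domination lemma and the positivity of the Cauchy extension), but the paper applies the domination lemma where it is unconstrained, and defers the bi-solution issue to a single final step; your version would need to bundle those two ingredients into a stronger ``bi-solution domination'' lemma before use.
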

To keep the exposition self-contained we will also sketch the relevant parts of the construction of \cite{DuiHoer_FIOii:1972} and \cite{IslamStrohmaier:2020} when proving Theorem~\ref{thm:DHISmod}.
With Theorem~\ref{thm:DHISmod} in hand, we can construct gauge-invariant quasifree states of the QFT using~\eqref{eq:covdefs}, which moreover fulfill the Hadamard condition~\eqref{eq:Hadcov}: 
\begin{corollary}\label{cor:complex}
	Let $\Box$ be any formally self-adjoint normally hyperbolic operator on smooth finite-rank complex positive-definite hermitian vector bundle $E\to M$ over a globally hyperbolic spacetime $M$. 
	Then the quantised complex field with equation of motion operator $\Box$ admits a gauge-invariant quasifree Hadamard state, whose two-point functions are determined by the
	Feynman propagator provided by Theorem~\ref{thm:DHISmod}.
\end{corollary}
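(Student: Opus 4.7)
The plan is to verify that the candidate bidistributions defined by \eqref{eq:covdefs}, built from the Feynman propagator supplied by Theorem~\ref{thm:DHISmod}, satisfy all five conditions of \eqref{eq:covariances} together with the Hadamard wavefront condition \eqref{eq:Hadcov}. A gauge-invariant quasifree Hadamard state is then produced by the standard construction of such a state on the CCR $*$-algebra of the complex field from any pair of bidistributions satisfying~\eqref{eq:covariances}.

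The first four conditions of \eqref{eq:covariances} are already available from Theorem~\ref{thm:DHIS} alone. The equation-of-motion identities follow from the fact that $G_\Feyn$, $G_\adv$ and $G_\ret$ are two-sided fundamental solutions of $\Box$ on $\comSecE$, the right-argument equations being a consequence of the formal self-adjointness of $\Box$. The CCR identity follows by subtracting the two defining formulas in \eqref{eq:covdefs} and invoking $G = G_\adv - G_\ret$. The two positivity inequalities are precisely the content of Theorem~\ref{thm:DHISmod}, evaluated on the test sections $f^\star\otimes f$ and $f\otimes f^\star$; this is the novel input that resolves the subtle point flagged in the introduction.

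It remains to verify the Hadamard wavefront condition~\eqref{eq:Hadcov}. The microlocal decomposition recalled in the introduction yields $G_\Feyn - G_\adv \sim T_{--} - T_{+-}$ and $G_\Feyn - G_\ret \sim T_{-+} - T_{++}$, with $\WFPrime{G_\Feyn - G_\adv} \subset C_-$ and $\WFPrime{G_\Feyn - G_\ret} \subset C_+$. The first inclusion gives $\WFPrime{W_{\Phi\Phi^\star}} \subset C_-$ at once. For $W_{\Phi^\star\Phi}$, inspection of \eqref{eq:covdefs} shows that its Schwartz kernel at $(x,y)$ equals $-\rmi$ times the kernel of $G_\Feyn - G_\ret$ at $(y,x)$, up to the canonical $E\leftrightarrow E^*$ identification induced by the hermitian fibre inner product. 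On primed wavefront relations this variable-swap acts by the involution $(x,\xi;y,\eta)\mapsto(y,-\eta;x,-\xi)$, and using the invariance of the null geodesic flow of $\Box$ under the simultaneous reversal $(\xi,t)\mapsto(-\xi,-t)$ one verifies that this involution interchanges $C_+$ with $C_-$. Hence $\WFPrime{W_{\Phi^\star\Phi}} \subset C_-$ as well. Equality in both inclusions is then forced by rewriting the CCR identity as $W_{\Phi\Phi^\star} - \widetilde{W_{\Phi^\star\Phi}} = \rmi G$ in a common variable order, where $\widetilde{W_{\Phi^\star\Phi}}$ is the variable-swap of $W_{\Phi^\star\Phi}$ (so $\WFPrime{\widetilde{W_{\Phi^\star\Phi}}} \subset C_+$), together with $\WFPrime{G} = C_+\cup C_-$ and the disjointness $C_+\cap C_- = \emptyset$: the two components of $\WFPrime{G}$ must be realised independently by the two terms on the left-hand side.

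The genuine obstacle has already been overcome in Theorem~\ref{thm:DHISmod}, whose domination argument secures the simultaneous validity of the two positivity conditions. Within the corollary itself the only mildly delicate step is the wavefront bookkeeping above, where one must carefully track the action of the variable-swap in \eqref{eq:covdefs} on the primed wavefront set so as to recognise that both two-point functions share the common Hadamard singular structure concentrated on $C_-$.
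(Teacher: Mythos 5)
Your proposal is correct and follows essentially the same route as the paper: define the two-point functions by \eqref{eq:covdefs}, obtain the equation-of-motion, CCR and both positivity requirements of \eqref{eq:covariances} from Theorem~\ref{thm:DHISmod}, and read off the Hadamard condition from the microlocal splitting of $G$ described in the introduction. The paper leaves the verification largely implicit (it is dispatched in one sentence in Section~\ref{sec:algebras}); your explicit tracking of the variable swap in $W_{\Phi^\star\Phi}$ and the argument forcing equality in \eqref{eq:Hadcov} are accurate elaborations of the same steps.
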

This statement completes and makes explicit Theorem~1.4 in~\cite{IslamStrohmaier:2020} for complex fields.
However, it does not automatically yield a Hadamard state for the real scalar field or general fields with a real structure. Assume that our complex positive-definite hermitian vector bundle has an additional structure, namely a complex conjugation (a fibrewise conjugate-linear involution) $\overline \cdot: E \to E$ that commutes with the operator $\Box$, which also determines a linear isomorphism $f\mapsto f^T$ between (sections of) $E$ and $E^*$ such that
$\dlangle f^T,h\drangle = \langle \overline{f},h\rangle$.
 In this case we can define the anti-symmetric Pauli-Jordan commutator distribution
{$G_\PJ\in \mathcal{D}'(M \times M; E^* \boxtimes E^*)$ from $G$ using the conjugation, setting
$$
G_\PJ(f\otimes h) = \langle \overline{f}, G h\rangle = G(f^T\otimes h). 
$$}
In order for a bi-distribution  $W \in \mathcal{D}'(M \times M; E^* \boxtimes E^*)$ to define a Hadamard state on the CCR-algebra for the hermitian QFT based on $\Box$, one requires
\begin{align}\label{eq:Hcovariance}  
	(\sadj{\Box}\otimes 1)W &= (1\otimes \sadj{\Box})W = 0 \nonumber\\
	W(\overline{f}\otimes f) &\ge 0 \\
	W(f\otimes h)- W(h\otimes f) &=\rmi G_\PJ(f\otimes h)   \nonumber
\end{align} 
and that $\WFPrime{W}\subset C_-$. Whereas the wavefront condition and the first two parts of~\eqref{eq:Hcovariance} are met by setting
\begin{equation}\label{eq:Wdef}
W(f\otimes h) = -\rmi(G_\Feyn-G_\adv)(f^T\otimes h)
\end{equation}
with $G_\Feyn$ furnished by Theorem~\ref{thm:DHIS}  or Theorem~\ref{thm:DHISmod}, the last condition in~\eqref{eq:Hcovariance} does not follow
automatically: the uniqueness of parametrices only guarantees that it will hold modulo smoothing operators. An exact equality would require one to modify the patching construction used in the proof of Theorem~\ref{thm:DHIS} to be compatible with the complex conjugation. Instead, we will show that a variant of the domination lemma allows one to modify $G_\Feyn$ by a smooth kernel so that the third condition is satisfied as well.

\begin{theorem}\label{thm:improved_Feynman}
  Assume that $\Box$ is formally self-adjoint and real,  and the bundle inner product is positive-definite. Then there exists a Feynman propagator $G_\Feyn$ such that
 \begin{align*}
   -\rmi (G_\Feyn -G_\ret) \geq 0,\\
  -\rmi (G_\Feyn -G_\adv) \geq 0,
 \end{align*}
 and 
 $$
  \langle \overline{f}, G_\Feyn h \rangle = \langle \overline{h}, G_\Feyn f \rangle.
 $$ 
\end{theorem}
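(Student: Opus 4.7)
The plan is to take the Feynman propagator $G_{\Feyn}^{0}$ provided by Theorem~\ref{thm:DHISmod}---which already fulfils the two positivity conditions---and to modify it by averaging with its ``bilinear transpose'' so as to achieve the required symmetry while preserving positivity. Using the real structure $\overline{\,\cdot\,}$ on $E$, I introduce the involution $\tau$ on the space of kernels characterised by
\[
  \langle \overline{f}, \tau[G] h\rangle \;=\; \langle \overline{h}, G f\rangle \qquad \forall\,f,h\in\comSecE;
\]
equivalently, on Schwartz kernels $\tau$ exchanges the two spacetime arguments and transposes the bundle fibres via the linear isomorphism $f\mapsto f^{T}$ induced by $\overline{\,\cdot\,}$. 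The symmetry condition demanded in the theorem is then exactly $\tau[G_{\Feyn}]=G_{\Feyn}$.

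The proof hinges on three properties of $\tau$. Firstly, the reality and formal self-adjointness of $\Box$ together with the uniqueness of the retarded and advanced fundamental solutions force $\tau[G_{\adv}]=G_{\ret}$ and $\tau[G_{\ret}]=G_{\adv}$. Secondly, $\tau$ preserves the Feynman class: at the level of primed wavefront relations it acts as $(x,\xi;x',\xi')\mapsto (x',-\xi';x,-\xi)$, and a direct calculation using the geodesic-flow identity $\Phi_{-t}(x,-\xi)=(\pi_{1}\Phi_{t}(x,\xi),-\pi_{2}\Phi_{t}(x,\xi))$ shows that this map sends $C_{\Feyn}$ to itself (and not to $C_{\aFeyn}$), so that $\tau[G_{\Feyn}^{0}]$ is again a Feynman propagator. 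Thirdly, $\tau$ preserves positivity: the identity
\[
  \langle f,\tau[G]f\rangle \;=\; \langle \overline{f},G\overline{f}\rangle
\]
---immediate from the defining relation for $\tau$---shows that $-\rmi G\ge 0$ if and only if $-\rmi\tau[G]\ge 0$. Combining the first and third properties gives $-\rmi(\tau[G_{\Feyn}^{0}]-G_{\adv})=-\rmi\tau[G_{\Feyn}^{0}-G_{\ret}]\ge 0$, and analogously for the $G_{\ret}$ difference, so $\tau[G_{\Feyn}^{0}]$ inherits both positivity bounds of $G_{\Feyn}^{0}$.

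I would then set $G_{\Feyn}:=\tfrac{1}{2}(G_{\Feyn}^{0}+\tau[G_{\Feyn}^{0}])$. Uniqueness of Feynman parametrices modulo smoothing operators implies that $\tau[G_{\Feyn}^{0}]-G_{\Feyn}^{0}$ is a smooth kernel; hence $G_{\Feyn}$ differs from $G_{\Feyn}^{0}$ by a smooth kernel and is itself a Feynman propagator. Both positivity inequalities are inherited from the two summands by linearity of the hermitian pairing, while $\tau[G_{\Feyn}]=G_{\Feyn}$ holds by construction, which is exactly the claimed bilinear symmetry. The main obstacle is the microlocal step in the second property above: verifying that the conjugation-induced involution of $\dot{T}^{*}(M\times M)$ preserves $C_{\Feyn}$---rather than interchanging it with $C_{\aFeyn}$---demands careful bookkeeping of how the swap of arguments, the negation of covectors, and the direction of the null-geodesic flow intertwine. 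Once this has been settled the averaging argument closes the proof, and in particular no fresh invocation of the domination lemma of Section~\ref{sec:domino} is needed beyond its role in establishing Theorem~\ref{thm:DHISmod}.
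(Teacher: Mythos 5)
Your proposal is correct and is essentially the paper's own argument: the paper defines the transpose $A^{\mathrm{T}}$ by $\langle\overline{f},Ah\rangle=\langle\overline{h},A^{\mathrm{T}}f\rangle$, observes $G_{\ret}^{\mathrm{T}}=G_{\adv}$ and that positivity is preserved under transposition, and takes $\tfrac{1}{2}(G_{\Feyn}+G_{\Feyn}^{\mathrm{T}})$ with $G_{\Feyn}$ from Theorem~\ref{thm:DHISmod}. The microlocal point you flag (that $(x,\xi;x',\xi')\mapsto(x',-\xi';x,-\xi)$ preserves $C_{\Feyn}$) is indeed needed and is handled exactly as you indicate; the paper makes it explicit only in the analogous Majorana argument of Theorem~\ref{thm:Majorana}.
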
 
{Indeed, a suitable Feynman propagator may be obtained by symmetrising one obtained from Theorem~\ref{thm:DHISmod}, for example. Theorem~\ref{thm:improved_Feynman}} then implies that the two-point function in~\eqref{eq:Wdef} obeys $W(f \otimes h) - W(h \otimes f) = \rmi G_{\PJ}(f\otimes h)$. Summarising:
\begin{corollary}\label{cor:hermitian} 
	Let $\Box$ be any real formally self-adjoint normally hyperbolic operator on smooth finite-rank complex  positive-definite hermitian vector bundle $E\to M$ over a globally hyperbolic spacetime $M$, assuming that $E$ is equipped with a complex conjugation. Then the quantised hermitian field with equation of motion operator $\Box$ admits a quasifree Hadamard state, whose two-point function is determined by the
	Feynman propagator provided by Theorem~\ref{thm:improved_Feynman}.
\end{corollary}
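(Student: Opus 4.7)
The plan is to define the bidistribution
$$W(f\otimes h) := -\rmi(G_\Feyn - G_\adv)(f^T\otimes h), \qquad f,h\in C_0^\infty(M;E),$$
where $G_\Feyn$ is the Feynman propagator furnished by Theorem~\ref{thm:improved_Feynman}, and to verify the three identities in~\eqref{eq:Hcovariance} together with the Hadamard wavefront condition $\WFPrime{W}\subset C_-$. These are precisely the requirements for $W$ to be the two-point function of a quasifree Hadamard state of the hermitian field, so $W$ will then determine the desired state on the CCR-algebra by the standard quasifree construction.

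I would first dispose of the routine verifications. The equation-of-motion conditions $(\sadj{\Box}\otimes 1)W = (1\otimes\sadj{\Box})W = 0$ follow immediately from $\Box G_\Feyn = \Box G_\adv = \id$, combined with the fact that the isomorphism $h\mapsto h^T$ intertwines $\Box$ and $\sadj{\Box}$. Positivity is a short unwinding: the defining relations of $\cdot^T$ and $\cdot^\star$ give $\overline{f}^T = f^\star$, hence
$$W(\overline{f}\otimes f) = -\rmi(G_\Feyn - G_\adv)(f^\star\otimes f) = -\rmi\langle f,(G_\Feyn - G_\adv)f\rangle \ge 0$$
by the first positivity inequality in Theorem~\ref{thm:improved_Feynman}. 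The wavefront inclusion $\WFPrime{W}\subset C_-$ comes from the microlocal decomposition $G_\Feyn - G_\adv \sim T_{--} - T_{+-}$ recalled in the introduction, whose wavefront relation is contained in $C_-$.

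The one substantive verification is the commutator relation $W(f\otimes h) - W(h\otimes f) = \rmi G_\PJ(f\otimes h)$. Expanding the definition,
$$W(f\otimes h)-W(h\otimes f) = -\rmi\bigl[\langle\overline{f},G_\Feyn h\rangle - \langle\overline{h},G_\Feyn f\rangle\bigr] + \rmi\bigl[\langle\overline{f},G_\adv h\rangle - \langle\overline{h},G_\adv f\rangle\bigr].$$
The crucial input is now the symmetry $\langle\overline{f},G_\Feyn h\rangle = \langle\overline{h},G_\Feyn f\rangle$ supplied by Theorem~\ref{thm:improved_Feynman}, which makes the first bracket vanish. What remains reduces to the classical transpose identity $\langle\overline{h},G_\adv f\rangle = \langle\overline{f},G_\ret h\rangle$, i.e.\ $G_\adv^T = G_\ret$ with respect to the symmetric bilinear pairing $B(u,v) = \langle\overline{u},v\rangle$. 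This holds because real formal self-adjointness of $\Box$ makes $\Box$ symmetric with respect to $B$; taking $B$-transposes of $\Box G_\ret = G_\ret\Box = \id$ exhibits $G_\ret^T$ as a two-sided fundamental solution with advanced support, and uniqueness of the advanced Green operator on a globally hyperbolic spacetime forces $G_\ret^T = G_\adv$. Substituting yields $\rmi\langle\overline{f},(G_\adv - G_\ret)h\rangle = \rmi G_\PJ(f\otimes h)$, as required.

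The main conceptual point is that the commutator relation is \emph{not} automatic: it would fail for an arbitrary Feynman propagator produced by the patching construction of Theorem~\ref{thm:DHIS}, holding only modulo smoothing operators. It is precisely the exact symmetry of $G_\Feyn$ enforced in Theorem~\ref{thm:improved_Feynman} that causes the $G_\Feyn$ contributions to cancel, leaving a standard identity among the retarded and advanced propagators. Once this reduction is made, every remaining axiom of a Hadamard state is either immediate from the microlocal properties of the patching construction or a direct consequence of the two positivity inequalities already guaranteed by Theorem~\ref{thm:improved_Feynman}, and the corollary follows.
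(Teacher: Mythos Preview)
Your proof is correct and follows exactly the approach the paper intends: define $W(f\otimes h)=-\rmi(G_\Feyn-G_\adv)(f^T\otimes h)$ with $G_\Feyn$ from Theorem~\ref{thm:improved_Feynman}, verify the bisolution, positivity, and wavefront conditions directly, and use the symmetry $\langle\overline{f},G_\Feyn h\rangle=\langle\overline{h},G_\Feyn f\rangle$ together with $G_\adv^T=G_\ret$ to obtain the commutator relation. The paper treats the corollary as an immediate consequence of Theorem~\ref{thm:improved_Feynman} (the text preceding it already records which of the conditions in~\eqref{eq:Hcovariance} are automatic and which require the symmetry), so you have simply written out the details the paper leaves implicit; the paper also sketches a second, independent derivation in Section~\ref{sec:algebras} via Lemma~\ref{lem:CtoA}, but your argument matches the primary one.
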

This result is used in the proof of Theorem~5.3 in~\cite{Fewster:2025a}, which develops the Hadamard condition for a class of Green hyperbolic operators going beyond the normally hyperbolic class but reducing to the usual notion of Hadamard in that situation. 

We proceed as follows. We  establish the domination lemma in Section~\ref{sec:domino} and then prove Theorems~\ref{thm:DHISmod} and~\ref{thm:improved_Feynman} in Section~\ref{sec:proof} before elaborating in greater depth on the algebraic quantisation of real and complex fields associated with $\Box$ in Section~\ref{sec:algebras}. 
This has two purposes. First it provides the background for the conditions~\eqref{eq:covariances} and~\eqref{eq:Hcovariance}. Second it also leads to  an alternative proof of  Corollary~\ref{cor:hermitian},
using Corollary~\ref{cor:complex} and results proved in~\cite{Fewster:2025a} concerning tensor products and partial traces of Hadamard states. This results in an explicit formula 
$$
	W(f\otimes h) = \frac{1}{2}\left(
W_{\Phi^\star\Phi}(f\otimes h^T+h\otimes f^T) +\rmi G_\PJ(f\otimes h)\right) 
$$ 
for a quasifree Hadamard two-point function of the hermitian field in terms of the two-point functions~\eqref{eq:covdefs} of the complex field obtained from Theorem~\ref{thm:DHISmod}. 
Section~\ref{sec:Dirac} proves parallel results for Dirac and Majorana type operators of definite type.
Finally, as this paper uses slightly different conventions to both of~\cites{IslamStrohmaier:2020,Fewster:2025a}, we provide a comparison in Appendix~\ref{appx:notation}.

To close the introduction we mention that Hadamard states can be constructed by other means and there is by now a wealth of methods available in various contexts (see, e.g.,~\cites{FullingSweenyWald:1978,Moretti:2008,GerardOulghaziWrochna:2017,VasyWrochna:2018}).
More generally other types of Feynman propagators without positivity assumptions are associated to scattering processes as in the in-out formalism (see e.g. \cite{dergas2025prop}). The purpose of 
our paper is to provide a general tool-box to convert microlocal splittings into Hadamard states. In particular, in combination
with \cite{IslamStrohmaier:2020} this gives a unified approach applicable to large classes of bundle-valued fields of both bosonic and fermionic nature.
 
\section{A domination property for smooth kernels} \label{sec:domino} 

We now assume $E$ is a complex hermitian vector bundle.

\begin{lemma}\label{lem:domination}
 Assume $k \in C^\infty(M \times M; E \boxtimes E^*)$ is self-adjoint, i.e. satisfies
 $$
  k(x,y)^* = k(y,x),
 $$
 where $k(x,y)$ is regarded as a linear map $E_y \to E_x$ with adjoint $k(x,y)^*$ with respect to the hermitian form.
 Then there exists a kernel $K \in C^\infty(M \times M; E \boxtimes E^*)$ that 
 \begin{enumerate}
  \item[(i)] $K \geq 0$.
   \item[(ii)] $K -k \geq 0$ in the sense that for all $u \in C^\infty_0(M;E)$ we have 
   $$
    \int_{M \times M} \langle u(x), \left( K(x,y) - k(x,y) \right) u(y) \rangle \der x \der y \geq 0.
   $$   
   \item[(iii)] In case $k$ is properly supported, $K$ can also be chosen properly supported.
 \end{enumerate}
  In case the bundle $E$ also admits a complex conjugation we can arrange $K$ so that in addition to the above it satisfies
 \begin{enumerate}
  \item[(iv)] $K$ is real in the sense that $\overline{K(x,y)} = K(x,y)$, where $\overline{K(x,y)}  v_y = \overline{ K(x,y) \overline v_y} $.
  \end{enumerate}
 \end{lemma}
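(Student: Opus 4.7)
The plan is to reduce to a local statement in a single trivialising chart and then patch via a locally finite partition of unity. The key algebraic ingredient is the elementary rank-one inequality
\[
a(x)b(y)^{*} + b(x)a(y)^{*} \leq a(x)a(y)^{*} + b(x)b(y)^{*}
\]
for $a,b\in C^{\infty}_{c}(M;E)$, which follows at once from $(a-b)(x)(a-b)(y)^{*}\geq 0$. Combined with a Fourier-type decomposition of $k$ in the chart, this is enough to construct $K$.

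First I would establish the lemma when $k\in C^{\infty}_{c}(U\times U;E\boxtimes E^{*})$ has compact support in a trivialising chart $U$ with $E|_{U}\cong U\times\C^{N}$. Embedding $U$ into a flat torus and fixing a cutoff $\psi\in C^{\infty}_{c}(U)$ equal to $1$ on both projections of $\mathrm{supp}\,k$, expand
\[
k(x,y)=\sum_{I,J}c_{IJ}\,F_{I}(x)F_{J}(y)^{*},\qquad F_{I}(x):=\psi(x)e^{2\pi\rmi\alpha\cdot x}e_{\mu},
\]
with $I=(\alpha,\mu)\in\Z^{n}\times\{1,\dots,N\}$ and $(c_{IJ})$ a Hermitian array. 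Smoothness of $k$ forces the entries $c_{IJ}$ to decay faster than any polynomial in $(\alpha,\beta)$. After absorbing the phases of the off-diagonal $c_{IJ}$ into the basis sections, the rank-one inequality produces
\[
k\leq \sum_{I} d_{I}\, F_{I}(x)F_{I}(y)^{*},\qquad d_{I}:=\sum_{J}|c_{IJ}|,
\]
which is a positive kernel. Rapid decay of $c$ is inherited by $d_{I}$ in $|\alpha|$, and this beats the polynomial growth $\|F_{I}\|_{C^{k}}\lesssim\langle\alpha\rangle^{k}$, so the series converges in $C^{\infty}$ and defines a smooth, compactly supported, positive dominator $K_{\mathrm{loc}}\geq k$.

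For the global assembly, I would choose a locally finite cover $\{U_{\alpha}\}$ of $M$ by relatively compact trivialising neighbourhoods, refined so that $\overline{U}_{\alpha}\cap\overline{U}_{\beta}\neq\emptyset$ implies that $U_{\alpha}\cup U_{\beta}$ lies in a common trivialising chart $V_{\alpha\beta}$, and take a subordinate partition of unity $\{\chi_{\alpha}\}$ with $\sum_{\alpha}\chi_{\alpha}=1$. Setting $k_{\alpha\beta}(x,y):=\chi_{\alpha}(x)\chi_{\beta}(y)k(x,y)$, we get
\[
k=\sum_{\alpha}k_{\alpha\alpha}+\sum_{\alpha<\beta}(k_{\alpha\beta}+k_{\beta\alpha}),
\]
whose summands are self-adjoint and compactly supported either in $U_{\alpha}\times U_{\alpha}$ or in $V_{\alpha\beta}\times V_{\alpha\beta}$. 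Applying the local step to each piece gives positive smooth dominators $K_{\alpha\beta}$, and the locally finite sum $K=\sum_{\alpha\leq\beta}K_{\alpha\beta}$ is smooth, positive, and satisfies $K\geq k$. Proper support of $k$ is transferred to $K$ because only finitely many summands have support meeting any compact set, using properness of $k$ with local finiteness of the cover; for (iv), choosing the trivialisations, cutoffs $\psi$, $\chi_{\alpha}$ and Fourier basis sections real with respect to the complex conjugation on $E$ (so the coefficients $c_{IJ}$ are real) makes the construction yield a real $K$.

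The main obstacle will be the $C^{\infty}$ convergence of the positive series in the local step, which rests on pitting the rapid decay of $c_{IJ}$ (from smoothness of $k$) against the polynomial growth of $\|F_{I}\|_{C^{k}}$ in the Fourier mode $\alpha$; the crucial algebraic point is that the rank-one inequality preserves rapid decay when passing from $c_{IJ}$ to $d_{I}$. The combinatorics of refining the cover in the global step is standard but has to be arranged compatibly with the proper support condition in (iii).
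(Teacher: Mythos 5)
Your local step is sound: the rank-one inequality $ab^*+ba^*\le aa^*+bb^*$ is exactly the algebraic core of the paper's argument, and the Fourier expansion on a torus with rapidly decaying Hermitian coefficients does produce a compactly supported positive dominator for a kernel supported in $U\times U$ for a single trivialising chart $U$. The genuine gap is in the global assembly. A smooth kernel $k\in C^\infty(M\times M;E\boxtimes E^*)$ is in no way concentrated near the diagonal, so the off-diagonal blocks $k_{\alpha\beta}+k_{\beta\alpha}$ are generically nonzero for pairs of charts with $\overline{U}_\alpha\cap\overline{U}_\beta=\emptyset$ that are arbitrarily far apart; your refinement only supplies a common trivialising chart $V_{\alpha\beta}$ for overlapping pairs, so these distant blocks are simply not covered by your local step (they live in $U_\alpha\times U_\beta\cup U_\beta\times U_\alpha$, not in any $V\times V$). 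Worse, even granting that each such block can be dominated (e.g.\ by trivialising over $U_\alpha\sqcup U_\beta$), the claimed local finiteness of $K=\sum_{\alpha\le\beta}K_{\alpha\beta}$ fails: fix $\alpha_0$; whenever $k$ is not properly supported there are infinitely many $\beta$ with $k_{\alpha_0\beta}\neq0$, and every dominator $K_{\alpha_0\beta}$ built from the rank-one inequality necessarily carries a term of the shape $\sum_I d_I F_IF_I^*$ supported on $U_{\alpha_0}\times U_{\alpha_0}$. Infinitely many positive contributions pile up on this fixed compact block with no summability control, so the series need not converge. Repairing this requires the weighted inequality $ab^*+ba^*\le t\,aa^*+t^{-1}bb^*$ with weights chosen to force summability of the diagonal contributions --- at which point you are reconstructing by hand what the paper obtains in one step from nuclearity of $C^\infty(M;E)$ and Grothendieck's representation of the projective tensor product, namely a globally convergent decomposition $k=\sum_j\lambda_j\phi_j\otimes\psi_j^\star$ with $(\lambda_j)\in\ell_1$ and $\phi_j,\psi_j$ null sequences, to which the symmetrisation trick is applied once.

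Note also that the paper reserves the chart-by-chart patching for part (iii) only, where proper support makes each piece $\frac12(k\psi_\gamma+\psi_\gamma k)$ compactly supported in $M\times M$ (not in a single chart product), and local finiteness of the assembled dominator is enforced by sandwiching each $K_\gamma$ between cutoffs, $\varphi_\gamma K_\gamma\varphi_\gamma$, which preserves positivity of $K_\gamma-k_\gamma$ because $\varphi_\gamma k_\gamma\varphi_\gamma=k_\gamma$. That cutoff device is the ingredient your patching is missing. Your treatment of (iv) is also fragile --- the Fourier exponentials are not real, so reality of the $c_{IJ}$ is not something you can simply arrange; the paper's replacement $K\mapsto K+\overline{K}$ is the robust fix.
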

 \begin{proof}
  (i) \& (ii) The space $C^\infty(M;E)$ is a nuclear Fr\'echet space and the projective tensor product
  $C^\infty(M;E) \otimes_\pi C^\infty(M;E^*)$ equals $C^\infty(M \times M; E \boxtimes E^*)$. By 
  Grothendieck's characterisation of the projective tensor product (\cite{MR1741419}*{Ch III, \S6.4}) we can find null-sequences $(\phi_j),(\psi_j)$ in
  $C^\infty(M;E)$ and a sequence $(\lambda_j) \in \ell_1(\mathbb{N})$ such that
  $$
   k =  \sum_j \lambda_j \phi_j \otimes \psi_j^\star,
  $$
  in other words
  $$
   k(x,y) =  \sum_j \lambda_j \phi_j(x)\psi_j(y)^{\star},
  $$
  in $C^\infty(M \times M; E \boxtimes E^*)$. We assume here without loss that $\lambda_j \geq 0$.
  Then, since $k(x,y)$ is self-adjoint we have
  $$
   k(x,y) =  \sum_j \frac{1}{2} \lambda_j  \left( \phi_j(x)  \psi_j(y)^\star +  \psi_j(x) \phi_j(y)^\star \right).
  $$
  Now define the positive kernel
  $$
   K(x,y) = \frac{1}{2} \sum_j \lambda_j \left( \phi_j(x) + \psi_j(x) \right) \left( \phi_j(y)^\star  + \psi_j(y)^\star  \right)
  $$
  Since the sequences $\phi_j, \psi_j, \phi_j^\star, \psi_j^\star$ converge to zero in $C^\infty$ and $\lambda\in\ell_1(\mathbb{N})$, the sum converges in $C^\infty(M \times M; E \boxtimes E^*)$. Moreover,
  $$
   K(x,y) - k(x,y) =  \frac{1}{2}\left( \sum_j \lambda_j  \phi_j(x) \phi_j(y)^\star \right) +  \frac{1}{2} \left(\sum_j \lambda_j  \psi_j(x) \psi_j(y)^\star \right) 
  $$
  and therefore $K-k \geq 0$.
  
  (iii) Now assume that $k$ is properly supported. It can then be written as a locally finite sum $\sum_\alpha k_\alpha$ of compactly supported smooth kernels $k_\alpha$.
  To see this simply write $k = \sum_\alpha \frac{1}{2} \left( k \psi_\alpha + \psi_\alpha k  \right)$ for a partition of unity $\psi_\alpha$ by compactly supported smooth functions
  and choose $k_\alpha =  \frac{1}{2} \left( k \psi_\alpha + \psi_\alpha k  \right)$, which is compactly supported because $k$ is properly supported.
 Next  choose compact sets $X_\alpha \subset M$ and open sets $\mathcal{U}_\alpha\subset M$ such that $X_\alpha \subset \mathcal{U}_\alpha$ in such a way that (a) each $k_\alpha$ is supported in $X_\alpha \times X_\alpha$ and (b) each compact subset of $M$ is intersected by only finitely many $\mathcal{U}_\alpha$.
  Now choose compactly supported real-valued smooth functions $\varphi_\alpha$ supported in $\mathcal{U}_\alpha$ and equal to one on $X_\alpha$. For each $k_\alpha$ we can then find a smooth positive kernel $K_\alpha$ so that $K_\alpha - k_\alpha \geq 0$. Then
  $$
   K(x,y) = \sum_\alpha \varphi_\alpha(x) K_\alpha(x,y)  \varphi_\alpha(y)
  $$
  is a locally finite sum and defines a smooth positive properly supported kernel $K$.
  Of course
  $$
   K - k = \sum_\alpha \varphi_\alpha \left( K_\alpha - k_\alpha \right)  \varphi_\alpha \geq 0,
  $$ 
   noting that $k_\alpha=\varphi_\alpha k_\alpha \varphi_\alpha$.

   (iv) In case there exists a complex conjugation we replace $K$ by $K + \overline{K}$; {as $\overline{K}\ge 0$, we maintain the domination property, but the overall operator is now real.}
 \end{proof}
 
We will use part~(iii) of Lemma~\ref{lem:domination} in conjunction with an observation about positive kernels. Given a properly supported non-negative smoothing operator $K \in C^\infty(M \times M; E \boxtimes E^*)$ we can think of it as a possibly unbounded operator on $L^2(M;E)$ with dense domain $L^2_\mathrm{comp}(M;E)$, the space of compactly supported $L^2$-sections. The Friedrichs extension of this operator is then a self-adjoint operator on $L^2(M;E)$ and we denote this by the same letter $K$. Indeed, this is again extended by the smoothing operator regarded as a map $\mathcal{D}'(M;E) \to C^\infty(M;E)$.

 \begin{lemma} \label{rootlemma}
  Assume that $K \in C^\infty(M \times M; E \boxtimes E^*)$ is a properly supported non-negative smoothing operator.
  Consider the operator
  $$
   T = \left(\id + K \right)^{-\frac{1}{2}}
  $$
  defined by the spectral calculus of the possibly unbounded operator $\id+K$ in $L^2(M;E)$.
  Then $\id - T$ is a non-negative operator with smooth integral kernel in $C^\infty(M \times M; E \boxtimes E^*)$.
 \end{lemma}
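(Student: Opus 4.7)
The plan is to establish both conclusions of Lemma~\ref{rootlemma} via the spectral theorem for the self-adjoint operator $K \ge 0$. Non-negativity is immediate: the spectrum of $\id + K$ lies in $[1,\infty)$, so $T = (\id+K)^{-1/2}$ has spectrum contained in $(0,1]$, whence $\id - T \ge 0$.

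For the smooth kernel, I would write $\id - T = f(K)$ via the Borel functional calculus with $f(\lambda) = 1 - (1+\lambda)^{-1/2}$, and then exploit the key decomposition
\begin{equation*}
	f(\lambda) = \tfrac{1}{2}\lambda + \lambda^2 q(\lambda), \qquad q(\lambda) := \lambda^{-2}\bigl( f(\lambda) - \tfrac{\lambda}{2}\bigr),
\end{equation*}
where $q$ is extended smoothly to $\lambda=0$ with $q(0) = -3/8$. One checks that $q$ is smooth and bounded on $[0,\infty)$ (indeed $q(\lambda) = O(\lambda^{-1})$ as $\lambda \to \infty$), so $B := q(K)$ is a bounded self-adjoint operator commuting with $K$. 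Since $K$ is properly supported and smoothing, it preserves $C_0^\infty(M;E)$, hence $C_0^\infty(M;E) \subset \mathrm{Dom}(K^n)$ for every $n$; on such test sections,
\begin{equation*}
	f(K) u = \tfrac{1}{2}Ku + KBKu.
\end{equation*}
The kernel of $\tfrac{1}{2}K$ is smooth by hypothesis. The remaining point is that $KBK$ also has smooth Schwartz kernel, given on test sections by
\begin{equation*}
	F(x,y) = \langle k(x,\cdot),\, B\, k(\cdot,y) \rangle_{L^2},
\end{equation*}
with appropriate identification of fibres via the hermitian and duality pairings. By proper support of $K$, both $k(x,\cdot)$ and $k(\cdot,y)$ are compactly supported smooth sections; the maps $x \mapsto k(x,\cdot)$ and $y \mapsto k(\cdot,y)$ are smooth from $M$ into $L^2$ (a consequence of joint smoothness of $k$ and uniform compact support on the sliced fibres), and composing with the continuous bounded bilinear form $(u,v) \mapsto \langle u, Bv\rangle$ yields joint smoothness of $F$ on $M \times M$. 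Combining the two contributions, $\id - T$ has smooth kernel.

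The main obstacle is that the naive decomposition $f(\lambda) = \lambda\, h(\lambda)$ with bounded $h$ only yields $\id - T = K h(K)$, a product of a smoothing operator and a bounded one, whose integral kernel need not be jointly smooth (it is separately smooth in each variable, but that does not suffice). The finer decomposition above is tailored precisely to produce the sandwich form $KBK$, in which the outer $K$ factors inject smoothness into each of the two arguments, while the inner bounded operator $B$ only needs to act continuously on $L^2$.
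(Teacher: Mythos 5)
Your argument is correct, and it implements the same key structural idea as the paper --- that the first-order factorisation $Kh(K)$ is not enough and one must expand to second order so that the remainder is sandwiched between two copies of the smoothing operator --- but by a genuinely different technical route. The paper starts from the operator integral $\id - T = \tfrac{2}{\pi}K\int_0^\infty(\id+K+\lambda^2)^{-1}(\id+\lambda^2)^{-1}\,\der\lambda$, applies the second-order resolvent identity $(\id+K+\lambda^2)^{-1} = (\id+\lambda^2)^{-1} - K(\id+\lambda^2)^{-2} + K(\id+K+\lambda^2)^{-1}(\id+\lambda^2)^{-2}K$, and then deduces smoothness of the kernel from self-adjointness together with continuity $H^s_{\mathrm{comp}}(M)\to C^\infty(M)$ for all $s$, which requires uniform-in-$\lambda$ bounds on $\chi(\id+K+\lambda^2)^{-1}\chi$. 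You instead perform the second-order expansion at the level of the scalar function $f(\lambda)=1-(1+\lambda)^{-1/2}=\tfrac12\lambda+\lambda^2 q(\lambda)$, obtaining $\id-T=\tfrac12 K + KBK$ with $B=q(K)$ bounded in a single application of the functional calculus, and then verify joint smoothness of the kernel of $KBK$ directly, via the smoothness of $x\mapsto k(x,\cdot)$ as an $L^2$-valued map (which uses proper support) paired through the bounded form $(u,v)\mapsto\langle u,Bv\rangle$. Your route avoids the operator integral and the Sobolev bookkeeping entirely and is arguably more elementary and self-contained; the paper's route has the advantage that the criterion ``self-adjoint and $H^s_{\mathrm{comp}}\to C^\infty$ for all $s$ implies smooth kernel'' is a reusable general principle. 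Your diagnosis of why the naive factorisation $Kh(K)$ fails (separate but not joint smoothness) is accurate and is precisely the point both proofs are built around. The only caveats --- that the integral operator $K$ on $C_0^\infty(M;E)$ must be identified with a non-negative self-adjoint extension so that the spectral calculus applies and $C_0^\infty(M;E)\subset\mathrm{Dom}(K^n)$ --- are shared with the paper's own proof and are harmless.
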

 \begin{proof}
  We can simply write, using spectral calculus,
  $$
   \id - T = \frac{2}{\pi} K \int^\infty_0 (\id+ K + \lambda^2)^{-1} (\id+\lambda^2)^{-1} \der \lambda.
  $$
  Since this is self-adjoint, it will have smooth integral kernel if we can show that this extends as a continuous map
  $H^s_\mathrm{comp}(M) \to C^\infty(M)$ for any $s \in \R$. Since $K$ is properly supported it defines in itself a continuous map
  $H^s_\mathrm{loc}(M)  \to C^\infty(M)$ for any $s \in \R$,  extending the self-adjoint operator $K$.  It is thus sufficient to show that
  $$
   \int^\infty_0 (\id+ K + \lambda^2)^{-1} (\id+\lambda^2)^{-1} \der \lambda
  $$
  defines a continuous map $H^s_\mathrm{comp}(M) \to H^s_\mathrm{loc}(M)$ for all $s \in \R$. By duality it is sufficient to prove that for $s \geq 0$ only.
  Hence, the statement boils down to showing that for any compactly supported smooth function $\chi$ the map $\chi (\id+ K + \lambda^2)^{-1} \chi$
  is a continuous map from $H^s_\mathrm{comp}(M)$ to $H^s_\mathrm{comp}(M)$ which is uniformly bounded in the parameter $\lambda$.
 To see this note the simple identity
 $$
  (\id+ K + \lambda^2)^{-1} = (\id+\lambda^2)^{-1} - K (\id+\lambda^2)^{-2} + K (\id+ K + \lambda^2)^{-1} (\id+\lambda^2)^{-2} K.
 $$
 Since $K \chi$ and $\chi K$ are compactly supported smooth kernels this implies the desired bounds.
 \end{proof}

\section{Proofs of Theorem~\ref{thm:DHISmod} and Theorem~\ref{thm:improved_Feynman}}\label{sec:proof}

We briefly sketch the construction of \cite{DuiHoer_FIOii:1972} and \cite{IslamStrohmaier:2020}.  We use here the construction of the corresponding partition of unity and the Fourier integral operators as a black box but focus on the structural part of the argument that leads to positivity and the relation between the propagators.
Consider the differential operator $\rmi \partial_n$.
If we split the coordinates $(x_1,\ldots,x_{n-1},x_n)=(y,x_n)$ its forward fundamental solution $F_+$ has distributional integral kernel equal to $-\rmi \theta(x_n-x_n') \delta(y-y')$.
The backward fundamental solution $F_-$ has distributional integral kernel equal to $\rmi \theta(-x_n+x_n') \delta(y-y')$. One can see immediately from this that 
$F_+^* = F_-$. Further, since $-\rmi \left( F_- -F_+ \right)$ has integral kernel $\delta(y-y')$ it is a positive distribution and therefore $-\rmi \left( F_- -F_+ \right)\otimes \one_E$ is positive, provided that $E$ is equipped with a positive definite inner product.
The operator $\rmi \partial_n\otimes\one$ serves as a model operator as it is microlocally conjugate to $\Box$ near the characteristic set of $\Box$. 

This is made precise by the construction of a family of pseudo-differential operators $\Psi_\alpha^\pm$,  and Fourier integral operators $A_\alpha^\pm$ that microlocally conjugate $\rmi \partial_n \otimes \one$ to $\Box$.
Here, the $\Psi_\alpha^\pm$ form a microlocal partition of unity subordinate to an open conic cover $\mathcal{U}^\pm_{\alpha}$ of $\dot T_{0,\pm} M$ that does not intersect $\dot T^*_{0,\mp} M$. Thus, the integral kernels of the $\Psi_\alpha^\pm$ have their twisted wavefront set contained in $\{(x,\xi,x,\xi) \mid (x,\xi) \in \mathcal{U}^\pm_{\alpha}\}$.
The $A_\alpha^\pm$ have order $-\frac{1}{2}$ with
$$
\WFPrime{A_\alpha^\pm} \subset  \{(x,\xi,y,\eta)\in\cU^\pm_\alpha \times \cV^\pm_\alpha : (x,\xi)=R^\pm_\alpha(y,\eta)\},
$$
where $R^\pm_\alpha:\cV^\pm_\alpha\to\cU^\pm_\alpha$ is an invertible homogeneous canonical transformation from an open conic subset $\cV^\pm_\alpha\subset \dot T^*\R^n$ to $\cU^\pm_\alpha\subset \dot T^*M$, 
	so that $\{(y,\eta) \in \cV_\alpha^\pm \mid \eta_n =0\}$ is mapped to the light cone part
	$T^*_0M\cap \cU^\pm_\alpha$.  
Further, there is a homogeneous elliptic symbol $e^{\pm}_\alpha$ of order $1$ so that
	$$
	\frac{g_x^{-1}(\xi,\xi)}{e^\pm_\alpha(x,\xi)} = \eta_n
	$$
	if $(x,\xi)=R^\pm_\alpha(y,\eta)$ with $(y,\eta)\in \cV^\pm_\alpha$, and the 
	flow generated by $\partial_n$ on $\{(y,\eta) \in \cV_\alpha^\pm \mid \eta_n =0\}\subset\dot{T}^*\R^n$ is therefore mapped to a rescaled null geodesic flow in $\cU^\pm_\alpha$, which is future-directed on the $\dot{T}^*_{0,+}M$ and past-directed on $\dot{T}^*_{0,-}M$.
	Here the symbol $e^\pm_\alpha$ is necessary to balance the different degrees of the operators $\Box$ and $\rmi \partial_n$.	
	Then,
	$$
	\WFPrime{F_{\pm}}=\WFPrime{\id} \cup \{(y,\eta;y',\eta')\in \dot{T}^*(\R^n\times\R^n):
	y_n \gtrless y_n',~ \underline{y}=\underline{y}',~\eta_n=\eta'_n=0,~\underline{\eta}=\underline{\eta}'
	\}
	$$  
	in which the nondiagonal points are related by the forward/backward flow of $\partial_n$, and the notation is that
	$y=(\underline{y},y_n)$, $\eta=(\underline{\eta},\eta_n)$.
	
This then allows one to construct the microlocal inverses of $\Box$ as
\begin{align*}
  T_{+ \pm} = \sum_\alpha \Psi_\alpha^\pm A_\alpha^\pm ( F_+ \otimes\one) (A_\alpha^\pm)^* (\Psi_\alpha^\pm)^*,\\
  T_{- \pm} = \sum_\alpha \Psi_\alpha^\pm A_\alpha^\pm (F_- \otimes \one ) (A_\alpha^\pm)^* (\Psi_\alpha^\pm)^*,
\end{align*} 
and we see that $\WFPrime{T_{\sigma \sigma'}}$ are as stated in the introduction.
By construction we have that
\begin{align*}
 -\rmi ( T_{--} - T_{+-}) \geq 0,& \quad -\rmi ( T_{-+} - T_{++}) \geq 0,\\
 (T_{+-})^* = T_{--},& \quad (T_{++})^* = T_{-+}.
\end{align*} 
As explained in the introduction one then has that
\begin{align*}
 G_\ret \sim T_{++} + T_{--}, \quad
 G_\adv \sim T_{-+} + T_{+-}.
\end{align*}
{
Therefore
$$
 G_\Feyn = T_{--} + T_{-+} + G_\adv - (T_{-+} + T_{+-}) \sim T_{--} + T_{-+} 
$$
is a Feynman parametrix that has the property that
$$
 -\rmi (G_\Feyn - G_\adv) =  -\rmi( T_{--}  -  T_{+-} ) \geq 0.
$$
This is essentially the proof carried out in \cite{DuiHoer_FIOii:1972} and \cite{IslamStrohmaier:2020}. However, with this choice of $G_\Feyn$ the inequality
$$
 -\rmi (G_\Feyn - G_\ret)  \geq 0
$$
is not automatic since
\begin{align*}
 -\rmi (G_\Feyn - G_\ret) = -\rmi(T_{--} - T_{+-} - (G_\ret - G_\adv)) \sim -\rmi( T_{-+} - T_{++} ).
\end{align*}
Indeed, $-\rmi (G_\Feyn - G_\ret)$ differs from the positive $-\rmi( T_{-+} - T_{++} )$ by a smooth kernel
\begin{align*}
	k &=-\rmi( T_{-+} - T_{++} ) +  \rmi (G_\Feyn - G_\ret) \\ 
	&=  \rmi( G_\adv - G_\ret ) + \rmi (T_{++} + T_{--} - T_{-+} -T_{+-}).
\end{align*}
This kernel is self-adjoint and,  by the domination lemma (Lemma~\ref{lem:domination}), we can therefore find a positive smooth kernel $K$ such that $K - k \geq 0$.
If we now replace $G_\Feyn$ by $G_\Feyn + \rmi K$ we have, on the one hand, that
$$
-\rmi (G_\Feyn - G_\adv) =  -\rmi( T_{--}  -  T_{+-} ) +K\geq 0,
$$
and on the other that 
$$
-\rmi (G_\Feyn - G_\ret) =  -\rmi( T_{-+} - T_{++} ) -k + K \geq 0,
$$
so we have achieved 
\begin{align} \label{feynpos}
	-\rmi (G_\Feyn - G_\ret) \geq 0, \quad  -\rmi (G_\Feyn - G_\adv) \geq 0. 
\end{align}
}

A further standard modification can now be made to replace the parametrix $G_\Feyn$ by an actual fundamental solution.
Here one chooses a spacelike and smooth Cauchy hypersurface $\Sigma$ and solves the distributional Cauchy problem to construct a bisolution $F$ on $M \times M$ whose Cauchy data at $\Sigma\times\Sigma$ agree with that of the kernel of $G_\Feyn - G_\adv$. Then replace 
$G_\Feyn$ by $F+G_\adv$.
Since this operation preserves positivity (\cite{IslamStrohmaier:2020}*{Lemma 3.2}) this gives a Feynman propagator on $M$ satisfying the first inequality in \eqref{feynpos}.
Note here that the operation of passing to Cauchy data at $\Sigma \times \Sigma$ and then extending to a bisolution to $M \times M$ results in the same correction when done with
 $(G_\Feyn - G_\ret)$ because $G_\ret-G_\adv$ is already a strict bisolution.  Therefore the second inequality in~\eqref{feynpos} also holds, thus completing the proof of Theorem~\ref{thm:DHISmod}.
 
It remains to prove Theorem~\ref{thm:improved_Feynman}. 
Define the transpose $A^\mathrm{T}$ of an operator $A: C_{0}^\infty(M,E) \to \mathcal{D}'(M,E)$
	to be an operator from $C_{0}^\infty(M,E)$ to $\mathcal{D}'(M,E)$ obeying
	$$
	\langle \overline f, A h \rangle =  \langle \overline h, A^\mathrm{T} f \rangle,
	$$
	for $f,h\in  C_{0}^\infty(M,E)$.
	Since $\Box$ is real and self-adjoint it is symmetric, i.e., $P^\mathrm{T}=P$, and therefore it follows that
	the retarded and advanced Green operators obey
	\begin{align*}
		G_\ret^\mathrm{T} = G_\adv, \quad G_\adv^\mathrm{T} = G_\ret.
	\end{align*}  
 Let $G_\Feyn$ be a Feynman propagator furnished by Theorem~\ref{thm:DHISmod}. To prove Theorem~\ref{thm:improved_Feynman}, we will use $G_\Feyn$ to construct a
 Feynman propagator that is symmetric, in the sense that
 $G_\Feyn = G_\Feyn^\mathrm{T}$, and also satisfies both positivity conditions. 
 First observe that
 $\widetilde{G}_\Feyn =\frac{1}{2} \left( G_\Feyn + G_\Feyn^\mathrm{T} \right)$ is a symmetric Feynman propagator, $C_\Feyn$ being invariant under the involution $(x,\xi,x',\xi') \mapsto (x',-\xi',x,-\xi)$.
 { 
 Now 
 \[
 -\rmi (\widetilde{G}_\Feyn - G_\ret) = -\frac{\rmi}{2} (G_\Feyn-G_\ret) - \frac{\rmi}{2}(G_\Feyn^\mathrm{T} -G_\adv^\mathrm{T}),
 \]
 and both terms on the right-hand side are nonnegative by Theorem~\ref{thm:DHISmod} and the fact that nonnegativity is preserved by the transpose. Transposing, we also find that 
 $-\rmi (\widetilde{G}_\Feyn - G_\adv)\ge 0$ as required.
 
 An alternative argument is to suppose that $G_\Feyn$ satisfies one but not necessarily both of the positivity properties in Theorem~\ref{thm:DHISmod}, say $-\rmi(G_\Feyn-G_\adv)\ge 0$ as in Theorem~\ref{thm:DHIS}. With $\widetilde{G}_\Feyn$ as before, we have
 \[
-\rmi (\widetilde{G}_\Feyn - G_\adv) = -\rmi (G_\Feyn-G_\adv) + \frac{\rmi}{2}(G_\Feyn-G_\Feyn^\mathrm{T}),
\]
in which the first term on the right-hand side is nonnegative, while the second is self-adjoint with a smooth kernel by the 
uniqueness of distinguished parametrices. Using part~(iv) of the domination lemma, Lemma~\ref{lem:domination}, we may find a positive kernel $K$ that is real, and therefore symmetric, and so that $-\rmi (\widetilde{G}_\Feyn +\rmi K- G_\adv)\ge 0$. Transposing, we find that $-\rmi(\widetilde{G}_\Feyn +\rmi K- G_\ret)$ is also nonnegative. 
Therefore $\widetilde{G}_\Feyn +\rmi K$ is a Feynman propagator with the properties stated in Theorem~\ref{thm:improved_Feynman}.}


\section{Real and complex quantised fields} \label{sec:algebras}

 We briefly review the construction of the complex and hermitian theories at the algebraic level to be able to justify the corresponding characterisations~\eqref{eq:covariances} and~\eqref{eq:Hcovariance} of quasifree Hadamard states and also to give an alternative proof of Corollary~\ref{cor:hermitian}.

Suppose that $\Box$ is formally self-adjoint, with no assumption that $E$ has a complex conjugation. Then the quantised theory of a complex field obeying $\Box\phi=0$ is constructed as follows. Let $\cC$ be the unital $*$-algebra with generators $\Phi(u)$, $\Phi^\star(f)$, labelled by test sections $u\in C_0^\infty(M;E^*)$ and $f\in C_0^\infty(M;E)$, subject to relations
\begin{enumerate}[label=C\arabic*] 
	\item $u\mapsto \Phi(u)$, $f\mapsto\Phi^\star(f)$ are complex-linear	
	\item $\Phi^\star(f)=\Phi(f^\star)^*$
	\item $\Phi^\star(\Box f) = 0 = \Phi(\sadj{\Box}u)$
	\item $[\Phi(u),\Phi^\star(f)] = \rmi \dlangle u, G f\drangle \one$, and $[\Phi(u),\Phi(v)]=0$
\end{enumerate}
for all $u,v\in C_0^\infty(M;E^*)$ and $f\in C_0^\infty(M;E)$. Occasionally we will write
$\cC(\Box)$ and $\Phi_\Box$, $\Phi^\star_\Box$, when it is important to emphasise the underlying operator $\Box$.

As usual, a state on $\cC$ is a linear functional $\omega:\cC\to\C$ that is normalised to $\omega(\boldsymbol{1})=1$ and is positive in the sense that $\omega(A^*A)\ge 0$ for all $A\in \cC$. 
We have already mentioned that a pair of distributions   
$W_{\Phi\Phi^\star}\in\cD'(M\times M;E\boxtimes E^*)$ and 
$W_{\Phi^\star\Phi}\in\cD'(M\times M;E^*\boxtimes E)$ obeying the relations~\eqref{eq:covariances} 
define a gauge-invariant quasifree state $\omega$ on $\cC$. Explicitly, $\omega$ is given by the identity
$$
\omega(\re^{\rmi(\Phi(u)+\Phi^\star(f))})= \re^{-\frac{1}{2}\left(
	W_{\Phi\Phi^\star}(u\otimes f)+W_{\Phi^\star\Phi}(f\otimes u) 	\right)}
$$
understood as a formal double series in $u$ and $f$, and so that
$$
W_{\Phi\Phi^\star}(u\otimes f)=\omega(\Phi(u)\Phi^\star(f)), \qquad 
W_{\Phi^\star\Phi}(f\otimes u)=\omega(\Phi^\star(f)\Phi(u))
$$
for all $f\in C_0^\infty(M;E)$, $u\in C_0^\infty(M;E^*)$.
This fact is well-known and can be found (in different notation) in e.g.,~\cites{DerezinskiGerard,Gerard}.
Using polarisation identities, linearity C1 and the commutation relations C4, the expectation value of any element of $\cC$ may be obtained. It is easily seen that the expectation value of any monomial in smeared fields vanishes unless there are equally many factors of $\Phi$ and $\Phi^\star$, which is why the state is called gauge-invariant -- the gauge transformation here being the global phase transformation $\Phi(u)\mapsto \re^{\rmi\alpha}\Phi(u)$,
$\Phi^\star(f)\mapsto  \re^{-\rmi\alpha}\Phi^\star(f)$ for $\alpha\in\R$.  As mentioned above, the choices given in~\eqref{eq:covdefs}, using Theorem~\ref{thm:DHISmod}, determine a gauge-invariant quasifree Hadamard state on $\cC$, {thus proving Corollary~\ref{cor:complex}.}

Next, suppose that $E$ has a complex conjugation $\overline{\cdot}$ with which $\Box$ commutes. Then there is a quantisation of the equation $\Box\phi=0$ in terms of a hermitian field. The algebra $\cA$ has generators $\Upsilon(f)$ labelled by $f\in C_0^\infty(M;E)$, subject to relations
\begin{enumerate}[label=H\arabic*] 
	\item  $f\mapsto\Upsilon(f)$ is complex-linear	
	\item $\Upsilon(f)=\Upsilon(\overline{f})^*$
	\item $\Upsilon(\Box f) = 0$
	\item $[\Upsilon(f),\Upsilon(h)] = \rmi \langle \overline{f}, G h\rangle \one$
\end{enumerate}
for all $f,h\in C_0^\infty(M;E)$. We will write $\cA(\Box)$ and $\Upsilon_\Box$ when necessary to indicate the underlying operator. As discussed above, a quasifree state $\omega$ on $\cA$ is determined by a distribution $ W \in\cD'(M\times M;E^*\boxtimes E^*)$ obeying conditions~\eqref{eq:Hcovariance}; in detail, $\omega$ is determined by
the identity
$$
\omega(\re^{\rmi \Upsilon(f)}) = \re^{-W(f\otimes f)/2},
$$
and one has $W(f\otimes h)=\omega(\Upsilon(f)\Upsilon(h))$. References for these standard
facts include~\cites{KhavkineMoretti:2015, KayWald:1992, DerezinskiGerard}. 

In fact, the theory of a complex field can be cast into the hermitian framework. Noting that
$E^*$ can be given a hermitian form $\langle u,v\rangle_{E^*}=\langle v^\star,u^\star\rangle_E$, we may endow $\tilde{E} = E\oplus E^*$ with a complex conjugation
$$
\overline{f\oplus u} = u^\star\oplus f^\star
$$
that commutes with the operator $\tilde{\Box}= \Box\oplus \sadj{\Box}$ acting on sections of $\tilde{E}$. Then the algebras $\cC(\Box)$ and $\cA(\tilde{\Box})$ may be identified so that
$$
\Upsilon_{\tilde{\Box}} (f\oplus u) = \Phi^\star_\Box(f) + \Phi_\Box(u)
$$
(see, e.g., Appendix~B of~\cite{Fewster:2025a}).
A general quasifree state $\omega$ of $\cA(\tilde{\Box})$ determines a gauge-invariant quasifree state of $\cC(\Box)$ provided $\omega(\Phi_\Box^\star(f)\Phi_\Box^\star(h))=
\omega(\Phi_\Box(u)\Phi_\Box(v))= 0$ for all $f,h\in C_0^\infty(M;E)$, $u,v\in C_0^\infty(M;E^*)$.
Moreover, the definition of a Hadamard state on $\cC(\Box)$ is precisely that the state be Hadamard for $\cA(\tilde{\Box})$. This correspondence is one way to prove that the two-point functions $W_{\Phi\Phi^\star}$ and $W_{\Phi^\star\Phi}$ obeying~\eqref{eq:covariances} determine a
gauge-invariant quasifree state on $\cC(\Box)$. 

There is another link between hermitian and complex fields. Suppose again that $E$ has a conjugation commuting with $\Box$. Then
$\overline{f\oplus h} = \overline{f}\oplus \overline{h}$ defines a complex conjugation on 
$E\oplus E$ with the direct sum metric, and the conjugation commutes with $\Box\oplus\Box$.  
Moreover, $\cC(\Box)$ may be identified with $\cA(\Box\oplus\Box)$ and the algebraic tensor product $\cA(\Box)\otimes \cA(\Box)$, so that
\begin{align*}
\Phi_\Box^\star(f) &= \Upsilon_{\Box\oplus\Box}\left(\frac{f}{\sqrt{2}}  \oplus \frac{-\rmi f}{\sqrt{2}} \right)
=
\frac{1}{\sqrt{2}}\left(\Upsilon_{\Box}(f)\otimes \one -\rmi \one\otimes \Upsilon_\Box(f)
\right),\nonumber\\
\Phi_\Box(u) &= \Upsilon_{\Box\oplus\Box}\left(\frac{u^T}{\sqrt{2}}  \oplus \frac{\rmi u^T}{\sqrt{2}} \right)
=
\frac{1}{\sqrt{2}}\left(\Upsilon_{\Box}(u^T)\otimes \one +\rmi \one\otimes \Upsilon_\Box(u^T)
\right),
\end{align*}
where $f\mapsto f^T$ is the linear isomorphism between (sections of) $E$ and (sections of) $E^*$ determined by $\dlangle f^T,h\drangle = \langle\overline{f},h\rangle$.
Moreover, these identifications respect the notions of Hadamard states for each theory,
because they preserve the wavefront sets of two point functions --
see Lemma~5.15 and Theorem 5.16 of~\cite{Fewster:2025a} for more general results in this direction. 
 In combination with Corollary~\ref{cor:complex}, the following lemma provides an alternative proof of Corollary~\ref{cor:hermitian}. 
\begin{lemma}\label{lem:CtoA}
	Any gauge-invariant quasifree Hadamard state of $\cC(\Box)$ with two-point distributions
	$W_{\Phi\Phi^\star}$ and $W_{\Phi^\star\Phi}$ determines a quasifree Hadamard state on $\cA(\Box)$ with two-point distribution
	$$
	W(f\otimes h) = \frac{1}{2}\left(
	W_{\Phi^\star\Phi}(f\otimes h^T+h\otimes f^T) +\rmi G_\PJ(f\otimes h)\right).  
	$$ 
\end{lemma}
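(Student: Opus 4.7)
The plan is to produce $W$ as the two-point distribution of the pull-back state $\omega'=\omega\circ\iota$ along the first-factor embedding
\begin{equation*}
\iota:\cA(\Box)\hookrightarrow\cA(\Box)\otimes\cA(\Box)\cong\cC(\Box),\qquad A\mapsto A\otimes\one,
\end{equation*}
where the identification $\cA(\Box)\otimes\cA(\Box)\cong\cC(\Box)$ is the one spelled out in Section~\ref{sec:algebras}. Inverting the defining linear relations for $\Phi^\star_\Box$ and $\Phi_\Box$ in terms of $\Upsilon_\Box\otimes\one$ and $\one\otimes\Upsilon_\Box$ gives
\begin{equation*}
\iota(\Upsilon_\Box(f))=\Upsilon_\Box(f)\otimes\one=\tfrac{1}{\sqrt{2}}\bigl(\Phi^\star_\Box(f)+\Phi_\Box(f^T)\bigr),
\end{equation*}
which I would take as the definition of $\iota$ on generators.

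First I would verify that $\iota$ respects H1--H4. Linearity and compatibility with the equation-of-motion quotient are immediate; the $*$-relation reduces via C2 to the identity $(\overline f)^T=f^\star$, which follows directly from the definitions of $f^\star$ and $f^T$; and the commutator in H4 follows from C4 after invoking the antisymmetry $G_\PJ(f\otimes h)=-G_\PJ(h\otimes f)$, itself a consequence of the transpose identity $G_\ret^T=G_\adv$ noted in the proof of Theorem~\ref{thm:improved_Feynman}. Positivity and normalisation of $\omega'$ are then automatic, and quasifreeness is inherited by substituting $u=f^T/\sqrt{2}$, $g=f/\sqrt{2}$ in the Gaussian generating functional of $\omega$, which yields $\omega'(\re^{\rmi\Upsilon(f)})=\re^{-W(f\otimes f)/2}$ with $W$ as below.

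Next I would compute
\begin{equation*}
W(f\otimes h)=\omega'(\Upsilon(f)\Upsilon(h))=\tfrac{1}{2}\,\omega\bigl((\Phi^\star(f)+\Phi(f^T))(\Phi^\star(h)+\Phi(h^T))\bigr).
\end{equation*}
Gauge invariance of $\omega$ annihilates $\omega(\Phi^\star(f)\Phi^\star(h))$ and $\omega(\Phi(f^T)\Phi(h^T))$, leaving $\tfrac{1}{2}\bigl(W_{\Phi^\star\Phi}(f\otimes h^T)+W_{\Phi\Phi^\star}(f^T\otimes h)\bigr)$; the CCR identity $W_{\Phi\Phi^\star}(f^T\otimes h)=W_{\Phi^\star\Phi}(h\otimes f^T)+\rmi G_\PJ(f\otimes h)$ then converts this expression into the symmetric form displayed in the lemma. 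The Hadamard wavefront condition $\WFPrime{W}\subset C_-$ follows because each of the $W_{\Phi\Phi^\star}$ and $W_{\Phi^\star\Phi}$ terms satisfies~\eqref{eq:Hadcov} and $f\mapsto f^T$ is a fibrewise linear isomorphism that does not affect base-space wavefront locations; alternatively this can be viewed as a special case of Lemma~5.15 and Theorem~5.16 of~\cite{Fewster:2025a}.

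I expect the main obstacle to be the $*$-algebra bookkeeping involved in checking that $\iota$ is a $*$-homomorphism: the various sign, transpose and conjugation conventions must mesh precisely, and in particular the matching of H4 requires the antisymmetry of $G_\PJ$ derived from $G_\ret^T=G_\adv$. Once this is in place, the explicit formula for $W$ and the wavefront estimate follow essentially by inspection.
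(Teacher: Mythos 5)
Your proposal is correct and follows essentially the same route as the paper: restricting the state to the first tensor factor of $\cA(\Box)\otimes\cA(\Box)\cong\cC(\Box)$, expanding $\Upsilon(f)\otimes\one=\tfrac{1}{\sqrt2}(\Phi^\star(f)+\Phi(f^T))$, and using gauge invariance plus the CCR to obtain the displayed formula. The only cosmetic difference is that the paper primarily invokes the general results of~\cite{Fewster:2025a} on partial traces of quasifree Hadamard states to justify that $\omega'$ is quasifree and Hadamard, whereas you verify H1--H4, positivity and the wavefront bound directly --- a check the paper also records as an alternative at the end of its proof.
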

\begin{proof}
	Let $\omega$ be the gauge-invariant quasifree Hadamard state on $\cC(\Box)$ determined by 
	$W_{\Phi\Phi^\star}$ and $W_{\Phi^\star\Phi}$. Under the identifications mentioned above, $\omega$ induces a state on $\cA(\Box)\otimes\cA(\Box)$ that is also Hadamard and quasifree (see Appendix~B in~\cite{Fewster:2025a} for more details). Tracing out one copy of $\cA(\Box)$ we obtain a state on $\cA(\Box)$ by $\omega'(A)=\omega(A\otimes\one)$, which is quasifree because its $n$-point functions are obtained by restriction of those of a quasifree state on $\cA(\Box\oplus\Box)$.
	The fact that $\omega'$ is Hadamard is guaranteed on general grounds because partial traces respect the Hadamard condition -- see Theorem 5.23(c) of~\cite{Fewster:2025a}. It remains to compute the two-point function. Noting that
	$$
	\Upsilon_{\Box}(f)\otimes \one = \frac{1}{\sqrt{2}}\left(\Phi_\Box^\star(f) + \Phi_\Box(f^T)\right),
	$$
	we compute
	\begin{align*}
	W(f\otimes h) &= \omega'(\Upsilon_\Box(f)\Upsilon_\Box(h)) = \omega((\Upsilon_\Box(f)\otimes\one)(\Upsilon_\Box(h)\otimes \one))  \\
	&=\frac{1}{2}
	\left(
	W_{\Phi^\star\Phi}(f\otimes h^T) + W_{\Phi\Phi^\star}(f^T\otimes h)
	\right)  \\
	&= \frac{1}{2}
	W_{\Phi^\star\Phi}(f\otimes h^T+h\otimes f^T)
	 + \frac{\rmi}{2}\dlangle f^T,G h\drangle,
	\end{align*}
	as required, where we have used~\eqref{eq:covariances} and gauge invariance of $\omega$. This completes the proof. 
	
	In fact, we may see directly that $\omega'$ fulfils the requirements~\eqref{eq:Hcovariance} to determine a quasifree state, using the last expression above to check the commutation relations and the penultimate one to check positivity. As $\WFPrime{W} \subset\WFPrime{W_{\Phi^\star\Phi}}\cup \WFPrime{W_{\Phi\Phi^\star}}\subset C_-$, it is also seen directly that $\omega'$ is Hadamard. 
\end{proof}

\section{Dirac-type operators}\label{sec:Dirac}

Let $E\to M$ be as before, with no assumption that the bundle inner product is positive definite. A formally self-adjoint operator $D$ on $C^\infty(M;E)$ is said to be of \textit{Dirac type} if its principal symbol obeys
\begin{equation*}
	\symb{D} \xxi^2 = g_{x}^{-1} (\xi, \xi) \, \one_{\End{E}} ,  
\end{equation*} 
so that $D^2$ is normally hyperbolic.
Furthermore, $D$ is of \textit{definite type}~\cites{BaerGinoux:2012,IslamStrohmaier:2020} if there is a smooth spacelike Cauchy surface $\Sigma$ for $M$ with future-pointing unit normal vector field $n$ such that 
$(f,h)\mapsto \langle \sigma_D(n^\flat)f,h\rangle$ is a positive definite sesquilinear form on $E|_\Sigma$.

The manifold $M$ need not admit a spin structure for it to support Dirac type operators, but if it does, then a conventional Dirac operator on an associated spinor bundle has definite Dirac type.  

Every Dirac type operator $D$ has Green operators $S_{\ret/\adv}$, and we set $S=S_\adv-S_\ret$. In the definite case,  we obtain an inner product $q_D$ on 
$C_0^\infty(M;E)/DC_0^\infty(M;E)$ by
$$
q_D([f],[h])=\int_\Sigma \langle\sigma_D(n^\flat)Sf,Sh\rangle \der A = \rmi \langle f, Sh\rangle,
$$ 
where $\der A$ is the metric-induced volume element on $\Sigma$.
 Moreover, the formal dual $\sadj{D}$ acting on sections of $E^*$ is also a Dirac type operator and $-\sadj{D}$ has definite type if and only if $D$ does (see~\cite{Fewster:2025a}). For consistency with~\cite{Fewster:2025a}, the conjugate-linear isomorphism between $E$ and $E^*$ induced by the fibre metric will be written as $f\mapsto f^+$ in this section, rather than $f\mapsto f^\star$.

The theory may be quantised as a fermionic QFT (see~\cites{BaerGinoux:2012,Fewster:2025a} for more detail), presented as a unital $*$-algebra $\cF$ with generators $\Psi(u)$ and $\Psi^+(f)$ 
for $f\in C_0^\infty(M;E)$, $u\in C_0^\infty(M;E^*)$ and relations 
\begin{enumerate}[label=D\arabic*] 
	\item $f\mapsto\Psi^+(f)$ and $u\mapsto \Psi(u)$ are complex-linear
	\item $\Psi(u)^*=\Psi^+(u^+)$
	\item $\Psi(\sadj{D}u) = 0 = \Psi^+(D f)$
	\item $\{\Psi(u),\Psi^+(f)\} = \rmi \dlangle u,S f\drangle\one= q_D([u^+],[f])\one$, $\{\Psi(u),\Psi(v)\} = 0$  
\end{enumerate}
for all $f\in C_0^\infty(M;E)$, $u,v\in C_0^\infty(M;E^*)$, where $\{X,Y\}=XY+YX$ is the anticommutator. If $D$ has definite type, a gauge-invariant quasifree state $\omega$ on $\cF$ is specified by 
a pair of distributions   
$W_{\Psi\Psi^+}\in\cD'(M\times M;E\boxtimes E^*)$ and 
$W_{\Psi^+\Psi}\in\cD'(M\times M;E^*\boxtimes E)$ obeying the relations  
\begin{align}\label{eq:Dirac_covariances}
	(D\otimes 1)W_{\Psi\Psi^+} &= (1\otimes \sadj{D})W_{\Psi\Psi^+} =0 \nonumber \\
	(1\otimes D)W_{\Psi^+\Psi} &= (\sadj{D}\otimes 1)W_{\Psi^+\Psi} =0 \nonumber \\
	W_{\Psi\Psi^+}(u\otimes f) + W_{\Psi^+\Psi}(f\otimes u) &= \rmi\dlangle u, Sf\drangle  \\
	W_{\Psi\Psi^+}(f^+\otimes f)& \ge 0 \nonumber \\ 
	W_{\Psi^+\Psi}(f\otimes f^+)&\ge 0\nonumber
\end{align}
for all test-sections $f\in C_0^\infty(M;E)$ and $u\in C_0^\infty(M;E^*)$, so that
\begin{align*}
	\omega(\Psi(u)\Psi^+(f))&=W_{\Psi\Psi^+}(u\otimes f), \\  
	\omega(\Psi^+(f)\Psi(u))&=W_{\Psi^+\Psi}(f\otimes u),\\
	\omega(\Psi(u)\Psi(v))&=\omega(\Psi^+(f)\Psi^+(h))=0
\end{align*}
for all $f,h\in C_0^\infty(M;E)$ and $u,v\in C_0^\infty(M;E^*)$. 
Indeed, the above conditions, together with the requirement that $\omega$ be quasifree, completely specify $\omega$ on $\cF$ and the only issue is to check positivity. For this, we refer the reader to e.g.~\cite{DerezinskiGerard}*{\S 17.2}.
Once again, gauge-invariance refers to a global $\textnormal{U}(1)$ symmetry.  

A Feynman propagator for $D$ is a Green operator $S_\Feyn$ with $\WFPrime{S_\Feyn} \subset \WFPrime{\id}\cup C_\Feyn$. In~\cite{IslamStrohmaier:2020} the following is proved
\begin{theorem}\label{thm:IS_Dirac}
	Assume that $D$ is formally self-adjoint Dirac type operator of definite type. Then there exists a Feynman propagator $S_\Feyn$ for $D$ such that 
	\begin{align*}
		-\rmi (S_\Feyn -S_\adv) \geq 0.
	\end{align*}
\end{theorem}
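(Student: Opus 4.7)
The plan is to adapt the microlocal construction of Section~\ref{sec:proof} to the Dirac setting, with the definite-type hypothesis taking over the role played by positive definiteness of the bundle inner product in the normally hyperbolic case.

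First, I would construct microlocal parametrices for $D$ directly using Fourier integral operators, rather than passing through $D^2$. Since $\sigma_D(\xi)^2 = g_x^{-1}(\xi,\xi)\,\one$, the symbol $\sigma_D$ is nilpotent on the light cone, and near each point of $\dot T^*_{0,\pm}M$ one can choose FIOs $A_\alpha^\pm$ of order $-1/2$ and a microlocal partition of unity $\Psi_\alpha^\pm$ that microlocally reduce $D$ to a model of the form $(\rmi\partial_n)\otimes B_\alpha^\pm$, where $B_\alpha^\pm$ is a constant matrix encoding the algebraic action of $\sigma_D$ on the corresponding light-cone component. This step is the Dirac analogue of the conjugation to $\rmi\partial_n\otimes\one$ recalled in Section~\ref{sec:proof}, and is where most of the technical work lies.

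With this in place, one defines the building blocks
\[
T_{\sigma\sigma'}^{(D)} \ = \ \sum_\alpha \Psi_\alpha^{\sigma'} A_\alpha^{\sigma'}\bigl(F_\sigma \otimes B_\alpha^{\sigma'}\bigr)(A_\alpha^{\sigma'})^*(\Psi_\alpha^{\sigma'})^*
\]
exactly as in Section~\ref{sec:proof}, and assembles $S_\ret \sim T_{++}^{(D)}+T_{--}^{(D)}$, $S_\adv \sim T_{-+}^{(D)}+T_{+-}^{(D)}$ and $S_\Feyn \sim T_{--}^{(D)}+T_{-+}^{(D)}$. The wavefront structure matches the required Feynman, retarded and advanced relations by exactly the tracking carried out in Section~\ref{sec:proof}. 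Positivity $-\rmi(S_\Feyn-S_\adv)\sim -\rmi(T_{--}^{(D)}-T_{+-}^{(D)})\ge 0$ then reduces to the non-negativity of each block $-\rmi(F_--F_+)\otimes B_\alpha^-$ in the hermitian form pulled back through $A_\alpha^-$. Since $-\rmi(F_--F_+)$ has scalar delta kernel on $\R^n$, this is equivalent to positivity of the matrix $B_\alpha^-$ in the conjugated form, which is exactly what the definite-type hypothesis delivers, $\sigma_D(n^\flat)$ being positive-definite on $E|_\Sigma$. The final upgrade from parametrix to Green operator is achieved by solving the distributional Cauchy problem on a spacelike Cauchy surface, which preserves positivity as in Lemma~3.2 of~\cite{IslamStrohmaier:2020}.

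The main obstacle is the first step: constructing the FIOs $A_\alpha^\pm$ so that the matrix factors $B_\alpha^\pm$ inherit their definiteness from $\sigma_D(n^\flat)$. Because $\sigma_D$ is nilpotent on the light cone, its microlocal normal form is not a diagonalisation, and one must choose the FIOs to implement a bundle trivialisation that is compatible with the positive-definite splitting of $E|_\Sigma$ determined by $\sigma_D(n^\flat)$. The algebraic bridge is the Clifford-type relation $\sigma_D(\xi)\sigma_D(\eta)+\sigma_D(\eta)\sigma_D(\xi)=2g_x^{-1}(\xi,\eta)\one$ obtained by polarising $\sigma_D^2$, which ties the nilpotent action on null covectors to the positive definite action of $\sigma_D(n^\flat)$. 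Once this is arranged, the remainder of the argument mirrors Section~\ref{sec:proof} almost verbatim.
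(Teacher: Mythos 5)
Your approach is substantially more complicated than the paper's and, as written, has a genuine gap at the step you yourself flag as the ``main obstacle.'' The paper's proof begins by stating explicitly that, in contrast to the second-order case, \emph{no microlocalisation is needed} for Dirac type operators. The key observation, which your proposal does not use, is that the definite-type hypothesis gives $\rmi S \geq 0$ directly, via the formula $q_D([f],[h]) = \int_\Sigma \langle\sigma_D(n^\flat)Sf,Sh\rangle\,\der A = \rmi\langle f,Sh\rangle$ defining a positive definite inner product on $C_0^\infty(M;E)/DC_0^\infty(M;E)$. Once one has $\rmi S \geq 0$, the microlocal splitting can be obtained simply by conjugation: $S_\pm = \Psi^\pm S (\Psi^\pm)^*$ with $\Psi^\pm$ properly supported pseudodifferential operators (not FIOs) whose microsupports lie in neighbourhoods of $\dot T^*_{0,\pm}M$ disjoint from $\dot T^*_{0,\mp}M$, and with $\Psi^\pm - \id$ microsupported away from $\dot T^*_{0,\pm}M$. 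Conjugation automatically preserves non-negativity, so $\rmi S_\pm \geq 0$ for free. The $S_\pm$ are then promoted to bisolutions by restriction to a Cauchy surface and solving the distributional Cauchy problem (only one piece of Cauchy data, as $D$ is first order), which changes them only by smooth kernels and preserves positivity. Setting $S_\Feyn = S_\adv - S_+$ completes the proof.

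The gap in your argument is concrete: you posit FIOs $A_\alpha^\pm$ that conjugate $D$ to a model $(\rmi\partial_n)\otimes B_\alpha^\pm$ with constant matrix factors $B_\alpha^\pm$ whose positivity is inherited from $\sigma_D(n^\flat)$, but you never construct these and never establish why the matrix factor appearing at a null covector $\xi$ (where $\sigma_D(\xi)$ is nilpotent) should be controlled by $\sigma_D(n^\flat)$ at a timelike covector $n^\flat$. Bridging this would require building a genuinely new normal-form theory for first-order systems with nilpotent principal symbol on the characteristic set, and the Clifford relation you invoke does not by itself supply the needed positive splitting. The paper's route avoids this entirely by exploiting the positivity of $\rmi S$ at the level of distributions rather than through any microlocal normal form, which is both simpler and complete. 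Your strategy is closer in spirit to the second-order construction (Section~\ref{sec:proof}), which indeed requires the FIO machinery because there is no analogue of $\rmi S \geq 0$ for $G = G_\adv - G_\ret$ in the bosonic case.
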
 

In contrast to second order operators no microlocalisation is needed to prove this statement. We briefly repeat the argument here as an adaptation will be used below.
\begin{proof}
 We consider a microlocal splitting $S \sim S_+ + S_-$ with $\WFPrime{S_\pm} \subset C_\mp$. Since $\rmi S \geq 0$ we can choose this microlocal splitting directly in such a way that
 $\rmi S_\pm \geq 0$.  This is for example achieved by setting  $S_\pm =  \Psi^\pm S (\Psi^\pm)^*$ where $\Psi^\pm$
 are properly supported pseudodifferential operators inducing a microlocal splitting as follows.
 The microsupports of $\Psi^\pm$ are contained in neighborhoods of $\dot T^*_{0,\pm} M$ that do not intersect $\dot T^*_{0,\mp} M$, whereas the microsupports of $\Psi^\pm-\id$ are contained in neighborhoods of $\dot T^*_{0,\mp} M$ that do not intersect $\dot T^*_{0,\pm} M$. Such pseudo-differential operators can always be constructed by quantising symbols with these support properties.
The operator $S$ maps compactly supported sections to sections with spacelike compact support. The microlocal splitting can always be chosen in such a way that this is still true for the operators $S_\pm$ even though in general we will no longer have the strict inclusion of $\supp{S_\pm f}$ in $J(\supp{f})$. Indeed, the supports of the integral kernels of $\Psi^\pm$ can be chosen as subsets of any arbitrary small open neighborhoods of the diagonal in $M \times M$ because multiplication by a smooth function that equals one near the diagonal only changes them modulo smoothing operators.
 
 We can promote $S_\pm$ to a bi-solution by the same method as before, namely restriction to a Cauchy surface and solution of the Cauchy problem.  Here, we note that only the restriction is needed because $D$ is a first-order operator, in contrast to the four components of Cauchy data needed in the second order case. This modification changes $S_\pm$ only by 
operators that solve the bisolution Cauchy problem with zero initial data and smooth source and hence have smooth kernels.
As this also preserves non-negativity we can assume therefore without loss of generality that $S_\pm$ are bi-solutions  with $\rmi S_\pm\ge 0$. Defining
 $S_\Feyn = S_\mathrm{adv} -S_+$, one checks directly that $\WFPrime{S_\Feyn} \subset C_\Feyn$ and therefore $S_\Feyn$ is a Feynman propagator with 
 $-\rmi (S_\Feyn -S_\adv) \geq 0$.
\end{proof}

\begin{theorem}\label{thm:ISF_Dirac}
	Assume that $D$ is formally self-adjoint Dirac type operator of definite type. Then there exists a Feynman propagator $S_\Feyn$ for $D$ such that 
	\begin{align*}
		-\rmi (S_\Feyn -S_\adv) \geq 0,\\
		+\rmi (S_\Feyn -S_\ret) \geq 0 	.
	\end{align*}
\end{theorem}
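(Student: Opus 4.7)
The plan is to strengthen the construction from the proof of Theorem~\ref{thm:IS_Dirac}: rather than accepting $S \sim S_+ + S_-$ only modulo smoothing, I aim to produce modified bisolutions $\tilde S_\pm$ with $\rmi\tilde S_\pm \ge 0$, $\WFPrime{\tilde S_\pm}\subset C_\mp$, and the \emph{exact} identity $\tilde S_+ + \tilde S_- = S$. If this is achieved, then $S_\Feyn := S_\adv - \tilde S_+$ is a Feynman propagator (its wavefront set lies in $\WFPrime{\id}\cup C_\Feyn$ by the same cancellation used in Theorem~\ref{thm:IS_Dirac}), and direct computation yields $-\rmi(S_\Feyn-S_\adv) = \rmi \tilde S_+\ge 0$ and $\rmi(S_\Feyn-S_\ret) = \rmi(S - \tilde S_+) = \rmi \tilde S_-\ge 0$. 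Note that, in contrast to the bosonic situation in Theorem~\ref{thm:DHISmod}, a straightforward smoothing correction to $S_\Feyn$ cannot succeed: because the two desired positivity conditions involve opposite signs of $\rmi$, any additive correction moves them in opposite directions, and no analogue of the bosonic argument using only Lemma~\ref{lem:domination} is available.

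To construct $\tilde S_\pm$, I would use the definite-type assumption to transfer the problem to the one-particle Hilbert space $\fh$, the $q_D$-completion of $C_0^\infty(M;E)/DC_0^\infty(M;E)$. The bisolutions $S$ and $S_\pm$ from Theorem~\ref{thm:IS_Dirac} descend to bounded operators on $\fh$: $\rmi S$ acts as the identity, while $\rmi S_\pm$ give bounded non-negative operators $P_\pm$. The smoothing bisolution $k := S - S_+ - S_-$ is anti-self-adjoint (since each of $S, S_\pm$ is anti-self-adjoint by virtue of $D$ being formally self-adjoint together with the construction $S_\pm = \Psi^\pm S (\Psi^\pm)^*$), so $\rmi k$ is self-adjoint; it descends to a self-adjoint compact operator $\kappa$ on $\fh$ with $P_+ + P_- = \id - \kappa$. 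The goal then is to find non-negative $\tilde P_\pm$ on $\fh$ satisfying $\tilde P_+ + \tilde P_- = \id$ and $\tilde P_\pm - P_\pm$ smoothing.

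This I would achieve by combining the two lemmas of Section~\ref{sec:domino}. Applying Lemma~\ref{lem:domination}(iii) to $\rmi k$, restricted as needed to a Cauchy surface to ensure proper support, yields a properly supported non-negative smoothing kernel $\sigma$ with $\sigma - \rmi k\ge 0$. Then $(P_+ + \sigma/2)+(P_- + \sigma/2) = \id + K$ with $K := \sigma - \rmi k\ge 0$ smoothing, which is exactly the setup of Lemma~\ref{rootlemma}. That lemma supplies a bounded operator $T = (\id+K)^{-1/2}$ with $\id - T$ non-negative and smoothing, satisfying $T(\id+K)T = \id$. Setting $\tilde P_\pm := T(P_\pm + \sigma/2)T$ gives $\tilde P_+ + \tilde P_- = \id$ and $\tilde P_\pm \ge 0$, while expanding
$\tilde P_\pm - P_\pm = (T-\id)P_\pm T + P_\pm(T-\id) + \tfrac{1}{2}T\sigma T$
shows the difference is smoothing, since both $T-\id$ and $\sigma$ are.

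I expect the main obstacle to be a careful implementation of the correspondence between operators on $\fh$ and bisolutions on $M\times M$: the construction on $\fh$ must be translated back to bisolutions $\tilde S_\pm$ with the same wavefront sets as $S_\pm$, via restriction to and extension from a Cauchy surface and the distributional Cauchy problem, parallel to the final Cauchy-surface step in the proof of Theorem~\ref{thm:DHISmod}. The proper-support clause of Lemma~\ref{lem:domination}(iii) is essential here, because it guarantees that $\sigma$ and the bounded operator $T$ derived from it compose correctly and yield smoothing bisolutions on $M\times M$; this is precisely the role flagged earlier as ``a further property of properly supported smooth kernels''.
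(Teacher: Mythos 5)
Your proposal is correct and takes essentially the same route as the paper: it identifies the key obstruction (that an additive smoothing correction cannot simultaneously improve both positivity conditions, since they involve opposite signs), restricts to a Cauchy surface where $\rmi S$ acts as the identity, applies Lemma~\ref{lem:domination}(iii) and Lemma~\ref{rootlemma} to turn the approximate splitting into an exact one $\tilde Q_+ + \tilde Q_- = \id$ with $\tilde Q_\pm \ge 0$ and $\tilde Q_\pm - Q_\pm$ smoothing, and then extends back to bisolutions. The only cosmetic difference is that you phrase the Cauchy-data step in terms of the one-particle Hilbert space $\fh$, while the paper works directly with kernels on $\Sigma\times\Sigma$ and is slightly more explicit that proper support of the error kernel $k$ (and hence applicability of both lemmas) comes from the fact that $S_\pm = \Psi^\pm S(\Psi^\pm)^*$ maps compactly supported sections to spacelike compactly supported ones; your ``restricted as needed to a Cauchy surface to ensure proper support'' is a touch vaguer on this point but captures the right idea.
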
 

\begin{proof}
 We use the same construction as in the proof of the previous theorem, again setting $S_\Feyn = S_\adv - S_+$. The statement follows directly if we can arrange the operators $S_\pm$ in such a way
 that  $S = S_+ + S_-$ whilst retaining $\rmi S_\pm \geq 0$. To do this we choose a Cauchy surface $\Sigma$.
 We first note that the restriction of $\rmi S$ to $\Sigma \times \Sigma$  is equal to the integral kernel of the identity operator. We use here  the existence of a natural $L^2$-inner product on the Cauchy data space 
$C^\infty_0(\Sigma, E|_\Sigma)$ and the associated Hilbert space $L^2(\Sigma, E|_\Sigma)$ due to the definite type assumption on $D$.
 
 As in the previous theorem we can find a microlocal splitting  $\rmi S\sim \rmi S_++ \rmi S_-$, so that the restriction of $\rmi S_+$ and $\rmi S_-$ to $\Sigma \times \Sigma$ yields two positive kernels $Q_+$ and $Q_-$ with the property that $\id + k = Q_+ + Q_-$ for some smooth kernel $k$ (here, we have not promoted $S_\pm$ to bisolutions). Furthermore, since $S_\pm$ map compactly supported sections to spacelike compactly supported sections, each $Q_\pm$ is a properly supported pseudodifferential operator.
 By the domination lemma, Lemma \ref{lem:domination}, the kernel $k$ can be dominated by a positive smooth properly supported kernel and we can thus modify $Q_\pm$ so that $Q_+ + Q_-= \id + K$, where $K$ is a smooth non-negative kernel that is properly supported.
 By Lemma \ref{rootlemma} the operator $(\id+K)^{-\frac{1}{2}}$ differs from the identity map by a smooth kernel. Therefore
 $\tilde Q_\pm = (\id+K)^{-\frac{1}{2}} Q_\pm (\id+K)^{-\frac{1}{2}}$ is a splitting $\tilde Q_+ + \tilde Q_- =\id$, with $\tilde{Q}_\pm-Q_\pm$ being smoothing operators, noting that $Q_\pm$ is properly supported.
 Extending $\tilde Q_\pm$ to bisolutions on $M \times M$ we obtain modifications of $\rmi S_\pm$ by smoothing operators so that
 $\rmi S = \rmi S_+ + \rmi S_-$, retaining positivity. Note that $(\id+K)^{-\frac{1}{2}}$ is not necessarily properly supported and neither  are $S_\pm$ properly supported in general.
  \end{proof}

Setting $$W_{\Psi\Psi^+}(u\otimes f) = -\rmi\langle u^+,(S_\Feyn-S_\adv) f\rangle,\qquad 
W_{\Psi^+\Psi}(f\otimes u)={+}\rmi\langle u^+,(S_\Feyn-S_\ret) f\rangle,$$
for $f\in C_0^\infty(M;E)$, $u\in C_0^\infty(M;E^*)$, one easily checks that~\eqref{eq:Dirac_covariances} are satisfied and that 
$$
\WFPrime{W_{\Psi\Psi^+}} \cup \WFPrime{W_{\Psi^+\Psi}} \subset {C_-},
$$
which shows that $\omega$ is Hadamard. 
Summarising, we have the following statement that completes and makes explicit Theorem~1.5 of~\cite{IslamStrohmaier:2020} and is used to prove Theorem~5.7 of~\cite{Fewster:2025a}.
\begin{corollary}
	Let $D$ be any formally self-adjoint Dirac type operator of definite type on a smooth finite-rank complex hermitian vector bundle $E\to M$ over a globally hyperbolic spacetime $M$. Then the quantised fermionic field with equation of motion operator $D$ admits a gauge-invariant quasifree Hadamard state, whose two-point functions are determined by the Feynman propagator provided by Theorem~\ref{thm:ISF_Dirac}.
\end{corollary}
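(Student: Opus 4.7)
The plan is to use Theorem~\ref{thm:ISF_Dirac} to secure a Feynman propagator $S_\Feyn$ satisfying both positivity inequalities simultaneously, and to define the candidate two-point distributions
\begin{align*}
W_{\Psi\Psi^+}(u\otimes f) &:= -\rmi\langle u^+,(S_\Feyn-S_\adv)f\rangle,\\
W_{\Psi^+\Psi}(f\otimes u) &:= +\rmi\langle u^+,(S_\Feyn-S_\ret)f\rangle,
\end{align*}
for $f\in C_0^\infty(M;E)$, $u\in C_0^\infty(M;E^*)$. I would then verify that these satisfy~\eqref{eq:Dirac_covariances} together with the Hadamard wavefront inclusion $\WFPrime{W_{\Psi\Psi^+}}\cup\WFPrime{W_{\Psi^+\Psi}}\subset C_-$, after which the standard construction of gauge-invariant quasifree fermionic states, for example in~\cite{DerezinskiGerard}*{\S17.2}, produces the desired Hadamard state $\omega$ on $\cF$ with these two-point functions.

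The verification of the first four lines of~\eqref{eq:Dirac_covariances} is routine. The equation-of-motion conditions follow immediately from the fact that $S_\Feyn$, $S_\adv$ and $S_\ret$ are Green operators, so that $S_\Feyn-S_\adv$ and $S_\Feyn-S_\ret$ are genuine bisolutions of $D$. The canonical anticommutator identity is the telescoping
$$
W_{\Psi\Psi^+}(u\otimes f)+W_{\Psi^+\Psi}(f\otimes u) = \rmi\langle u^+,(S_\adv-S_\ret)f\rangle = \rmi\dlangle u,Sf\drangle,
$$
where I use $S=S_\adv-S_\ret$ and the identity $\langle u^+,v\rangle=\dlangle u,v\drangle$, which follows from the defining property $\dlangle g^+,h\drangle=\langle g,h\rangle$ of $(\cdot)^+$ together with its involutivity $(u^+)^+=u$. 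The positivity conditions $W_{\Psi\Psi^+}(f^+\otimes f)\ge 0$ and $W_{\Psi^+\Psi}(f\otimes f^+)\ge 0$ reduce directly, via $(f^+)^+=f$, to the two inequalities furnished by Theorem~\ref{thm:ISF_Dirac}.

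The Hadamard condition is the remaining item. The $\WFPrime{\id}$ components cancel in each difference, and the microlocal splitting $S\sim S_++S_-$ with $\WFPrime{S_\pm}\subset C_\mp$ underlying the proof of Theorem~\ref{thm:IS_Dirac}, combined with $S_\Feyn=S_\adv-S_+$, yields $\WFPrime{S_\Feyn-S_\adv}\subset C_-$ and $\WFPrime{S_\Feyn-S_\ret}\subset C_+$. Since $W_{\Psi\Psi^+}$ has the same argument order as the operator $S_\Feyn-S_\adv$ (and $(\cdot)^+$ is fibrewise, hence invisible to wavefront sets), this gives $\WFPrime{W_{\Psi\Psi^+}}\subset C_-$ directly. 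For $W_{\Psi^+\Psi}$, the formula interchanges the positions of input and output variables, and a short computation using the sign flip in the primed convention together with the time-reversal symmetry $\Phi_{-t}(x,-\xi)=(y,-\eta)$ of the geodesic flow shows that the $C_+$ wavefront relation of $S_\Feyn-S_\ret$ is carried onto a $C_-$ wavefront relation for $W_{\Psi^+\Psi}$. The substantive analytic content is entirely absorbed into Theorem~\ref{thm:ISF_Dirac}; the only genuine care required here is bookkeeping of signs, and in particular the choice $+\rmi$ in the definition of $W_{\Psi^+\Psi}$ (in contrast to the $-\rmi$ appropriate for the bosonic case) reflects the anticommutator rather than commutator structure of the fermionic algebra.
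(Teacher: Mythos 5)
Your proposal is correct and follows essentially the same route as the paper: the paper defines exactly the same two-point distributions $W_{\Psi\Psi^+}(u\otimes f) = -\rmi\langle u^+,(S_\Feyn-S_\adv) f\rangle$ and $W_{\Psi^+\Psi}(f\otimes u)=+\rmi\langle u^+,(S_\Feyn-S_\ret) f\rangle$ from the propagator of Theorem~\ref{thm:ISF_Dirac}, checks~\eqref{eq:Dirac_covariances} and the inclusion $\WFPrime{W_{\Psi\Psi^+}}\cup\WFPrime{W_{\Psi^+\Psi}}\subset C_-$, and invokes the standard quasifree construction. You have merely spelled out the verifications the paper labels as ``one easily checks,'' including the correct $\cR$-type wavefront bookkeeping that carries the $C_+$ relation of $S_\Feyn-S_\ret$ onto $C_-$ for $W_{\Psi^+\Psi}$.
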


Lastly, suppose $E$ is equipped with a skew complex conjugation, i.e., a conjugate-linear fibrewise involution such that 
$$
\langle \Gamma f,\Gamma h\rangle = -\langle h,f\rangle
$$ 
for all $f,h\in C^\infty(M;E)$. Suppose that $D$ is formally self-adjoint Dirac type and of definite type, which commutes with $\Gamma$. Then  $D$ may be quantised as a hermitian \textit{Majorana} field theory, in terms of a unital $*$-algebra $\cM$ 
generated by $\Xi(f)$, $f\in C_0^\infty(M;E)$ satisfying
\begin{enumerate}[label=M\arabic*] 
	\item $f\mapsto\Xi(f)$ is complex-linear
	\item $\Xi(f)^*=\Xi(\Gamma{f})$
	\item $\Xi(Df) = 0$
	\item $\{\Xi(f),\Xi(h)\} = \rmi \langle \Gamma{f},S h\rangle\one = q_D([\Gamma f],[h])\one$
\end{enumerate}
for all $f,h\in C_0^\infty(M;E)$ (cf.\ e.g.,~\cite{SahlmannVerch:2000RMP}).
A quasifree state $\omega$ on $\cM$ is determined by a distribution $W\in \cD'(M\times M;E^*\boxtimes E^*)$ obeying
\begin{align}\label{eq:Majorana_covariance}
	W\circ (D\otimes 1) &= W\circ (1\otimes {D}) = 0 \nonumber \\
	W(f\otimes h) +W(h\otimes f) &=  \rmi \langle \Gamma{f},S h\rangle \\
	W(\Gamma{f}\otimes f) & \ge 0,\nonumber
\end{align}
 so that $\omega(\Xi(f)\Xi(h))= W(f\otimes h)$ for all $f,h\in C_0^\infty(M;E)$.
By similar arguments to those used for the hermitian bosonic field, one may prove:
\begin{theorem}\label{thm:Majorana}
	Assume in addition to the hypotheses of Theorem~\ref{thm:IS_Dirac} 
	that $D$ commutes with a skew complex conjugation {$\Gamma$} on $E$. 
	Then there is a Feynman propagator $\widetilde{S}_\Feyn$ for $D$ with the properties of {Theorem~\ref{thm:ISF_Dirac}} and 
	\begin{equation}\label{eq:skew}
	\langle \Gamma f, \widetilde{S}_\Feyn h\rangle =-\langle \Gamma h, \widetilde{S}_\Feyn f\rangle.
	\end{equation}
	Explicitly,
	$$
		\langle \Gamma f, \widetilde{S}_\Feyn h\rangle =
		\frac{1}{2}\left(\langle \Gamma f, S_\Feyn h\rangle -
			\langle \Gamma h, S_\Feyn  f\rangle\right) 
	$$
	where $S_\Feyn$ is a {Feynman} propagator furnished by Theorem~\ref{thm:ISF_Dirac}.
\end{theorem}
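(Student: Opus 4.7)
The plan is to set $\widetilde S_\Feyn := \tfrac{1}{2}(S_\Feyn + S_\Feyn^\tau)$, where the ``skew-transpose'' $S_\Feyn^\tau := \Gamma S_\Feyn^* \Gamma$ is the fermionic analogue of the transpose appearing in the bosonic proof of Theorem~\ref{thm:improved_Feynman}. The identification with the explicit formula of the theorem follows by unfolding:
\begin{equation*}
\langle \Gamma f, S_\Feyn^\tau h\rangle = \langle \Gamma f, \Gamma (S_\Feyn^* \Gamma h)\rangle = -\langle S_\Feyn^*\Gamma h, f\rangle = -\langle \Gamma h, S_\Feyn f\rangle
\end{equation*}
using the skew identity $\langle \Gamma a, \Gamma b\rangle = -\langle b,a\rangle$, from which the skew-symmetry $\langle\Gamma f,\widetilde S_\Feyn h\rangle = -\langle \Gamma h,\widetilde S_\Feyn f\rangle$ of the theorem is immediate. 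Since $D$ is formally self-adjoint and commutes with $\Gamma$, uniqueness of retarded and advanced Green operators yields $\Gamma S_{\ret/\adv}\Gamma = S_{\ret/\adv}$, whence $S_\ret^* = S_\adv$ gives $S_\ret^\tau = S_\adv$ and $S_\adv^\tau = S_\ret$. The same argument shows $D S_\Feyn^\tau = \id = S_\Feyn^\tau D$, so $\widetilde S_\Feyn$ is a fundamental solution.

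For positivity, two properties of $\tau$-transpose are crucial. First, it is complex-linear: $(\lambda A)^\tau = \lambda A^\tau$, because the $\bar\lambda$ produced by $(\lambda A)^* = \bar\lambda A^*$ is restored to $\lambda$ by the outer conjugate-linear $\Gamma$. Second, for any self-adjoint $A$, setting $v = \Gamma u$ and using skewness gives
\begin{equation*}
\langle v, A^\tau v\rangle = \langle \Gamma u, \Gamma A u\rangle = -\langle A u, u\rangle = -\langle\Gamma v, A\Gamma v\rangle,
\end{equation*}
so $A\ge 0 \Leftrightarrow A^\tau\le 0$ --- the sign flip characteristic of the skew case, in contrast to the bosonic hermitian situation of Theorem~\ref{thm:improved_Feynman} where $A \ge 0 \Leftrightarrow A^T \ge 0$.

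With these in hand, using $S_\adv = S_\ret^\tau$ together with complex-linearity of $\tau$-transpose, one computes
\begin{equation*}
-\rmi(\widetilde S_\Feyn - S_\adv) = \tfrac{1}{2}\bigl(-\rmi(S_\Feyn - S_\adv)\bigr) + \tfrac{1}{2}\bigl(-\rmi(S_\Feyn - S_\ret)\bigr)^\tau.
\end{equation*}
The first bracket is $\ge 0$ by Theorem~\ref{thm:ISF_Dirac}. The second is the $\tau$-transpose of $-\rmi(S_\Feyn - S_\ret)$, which is $\le 0$ again by Theorem~\ref{thm:ISF_Dirac}; by the sign-flip property its $\tau$-transpose is $\ge 0$. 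Summing gives the required $-\rmi(\widetilde S_\Feyn - S_\adv) \ge 0$, supplying the property inherited from Theorem~\ref{thm:IS_Dirac} (the symmetric computation also yields $+\rmi(\widetilde S_\Feyn - S_\ret) \ge 0$). The main subtlety is precisely this sign bookkeeping: the $\tau$-transpose \emph{reverses} positivity of self-adjoint operators (unlike the bosonic transpose), so the naive symmetrisation succeeds only because the two-sided positivity of the stronger Theorem~\ref{thm:ISF_Dirac} supplies exactly what is needed to absorb the flip.
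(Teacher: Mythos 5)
Your construction $\widetilde S_\Feyn = \tfrac{1}{2}(S_\Feyn + \Gamma S_\Feyn^*\Gamma)$ is the same symmetrisation as in the paper, and the unfolding of the explicit formula, the skew-symmetry, and the identification $S_\ret^\tau = S_\adv$ are all correct. The positivity argument is a genuine contribution: the observation that $\tau$ is complex-linear while \emph{reversing} positivity of self-adjoint kernels (in contrast to the bosonic transpose), so that two-sided positivity from Theorem~\ref{thm:ISF_Dirac} is exactly what absorbs the sign flip, is worked out cleanly and is left implicit in the paper's own proof of Theorem~\ref{thm:Majorana}.

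However, there is a genuine gap: you assert that $\widetilde S_\Feyn$ is ``a fundamental solution'' and stop, but the theorem claims it is a \emph{Feynman propagator}, which in addition requires the wavefront relation $\WFPrime{\widetilde S_\Feyn} \subset \WFPrime{\id}\cup C_\Feyn$. Being a Green operator does not deliver this on its own (a Green operator could a priori pick up singularities in both $C_+$ and $C_-$ of a non-Feynman type, since the difference $S_\Feyn^\tau - S_\Feyn$ is merely a bisolution). This verification is the main explicit content of the paper's proof: the kernel of $S_\Feyn^\tau$ is obtained from that of $S_\Feyn$ by the swap-and-reflect map $\cR(x,\xi;x',\xi')=(x',-\xi';x,-\xi)$, so $\WFPrime{S_\Feyn^\tau}\subset\cR\,\WFPrime{S_\Feyn}$, and one must observe that $\WFPrime{\id}\cup C_\Feyn$ is $\cR$-invariant (using that the time-$t$ geodesic flow intertwines $\xi\mapsto-\xi$ with $t\mapsto-t$). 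Only after this does $\WFPrime{\widetilde S_\Feyn}\subset\WFPrime{\id}\cup C_\Feyn$ follow. You should add this step; otherwise your argument establishes the explicit formula, skew-symmetry and positivity, but not the Feynman property itself.
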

\begin{proof}
	Using the fact that $\Gamma$ is a skew conjugation and the properties of $S_\Feyn$, it is easily seen that $\widetilde{S}_\Feyn$ is a Green operator. The kernel distribution is
	$$
	\widetilde{S}_\Feyn(u\otimes f) = \frac{1}{2}\left( S_\Feyn(u\otimes f) - S_\Feyn((\Gamma f)^+\otimes (\Gamma u^+)\right);
	$$
	consequently, one has the wave-front set inclusion
	$$
	\WFPrime{\widetilde{S}_\Feyn}\subset \WFPrime{S_\Feyn}\cup \cR \WFPrime{S_\Feyn},
	$$
	where $\cR(x,\xi;x',\xi')=(x',-\xi';x,-\xi)$. As $\WFPrime{S_\Feyn}$ is $\cR$-invariant, it follows that $\WFPrime{\widetilde{S}_\Feyn}\subset \WFPrime{\id}\cup C_\Feyn$, {so $S_\Feyn$ is a Feynman propagator. Finally, we may compute
	\[
	-\rmi\langle f,(\widetilde{S}_\Feyn-S_\adv) f\rangle = -\frac{\rmi}{2}\langle f,(S_\Feyn-S_\adv) f\rangle +  \frac{\rmi}{2}\langle \Gamma f,(S_\Feyn-S_\ret) \Gamma f\rangle +
	\frac{\rmi}{2}(\langle f,S_\adv f\rangle - \langle \Gamma S_\ret \Gamma f,f\rangle)
	\]
	using the properties of the skew-conjugation. The first two terms on the right-hand side are nonnegative by Theorem~\ref{thm:ISF_Dirac} and the last one vanishes because $\Gamma$ commutes with $S_\ret$, which is the formal adjoint of $S_\adv$. Thus $-\rmi (\widetilde{S}_\Feyn-S_\adv)\ge 0$ and an analogous calculation shows that $+\rmi (\widetilde{S}_\Feyn-S_\ret)\ge 0$ as well.
	}
\end{proof}
\begin{corollary}
	Let $D$ be any formally self-adjoint Dirac type operator of definite type on a smooth finite-rank complex hermitian vector bundle $E\to M$ over a globally hyperbolic spacetime $M$, and suppose $D$ commutes with a skew complex conjugation on $E$. Then the quantised Majorana theory $\cM(D)$ admits a quasifree Hadamard state. Explicitly, the two-point function can be given as either 
	{ $$ 
	W(f\otimes h) = -\rmi\langle \Gamma f,(S_\Feyn-S_\adv)h\rangle
	$$}
	using the $S_\Feyn$ from Theorem~\ref{thm:Majorana}, or equivalently by the formula
	$$
		W(f\otimes h)= \frac{1}{2}W_{\Psi^+\Psi}(f\otimes (\Gamma h)^+ - 
	h\otimes (\Gamma f)^+)+ \frac{\rmi}{2}\dlangle (\Gamma f)^+,Sh\drangle.
	$$
	where $W_{\Psi^+\Psi}$ is a two-point function for a gauge-invariant quasifree Hadamard state of
	the Dirac theory $\cF(D)$.
\end{corollary}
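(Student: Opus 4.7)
The plan is to take $\widetilde S_\Feyn$ from Theorem~\ref{thm:Majorana} and simply verify that
$$W(f\otimes h) := -\rmi\langle \Gamma f, (\widetilde S_\Feyn - S_\adv) h\rangle$$
fulfils each of the conditions~\eqref{eq:Majorana_covariance} together with the wave-front bound $\WFPrime{W}\subset C_-$, and then separately derive the equivalent formula via the Dirac-theory two-point function. The equation-of-motion clauses are immediate because $\widetilde S_\Feyn$ and $S_\adv$ are Green operators for $D$ and because $\Gamma$ commutes with $D$.

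The key calculation is the antisymmetric commutation identity. I would first establish two algebraic facts from the skew conjugation: (a) $\langle \Gamma a, b\rangle = -\langle \Gamma b, a\rangle$ for all $a,b$, which follows from the defining relation $\langle \Gamma f,\Gamma h\rangle=-\langle h,f\rangle$ together with $\Gamma^{2}=\id$; (b) $\Gamma$ commutes with $S_{\ret}, S_{\adv}$, and $S$, since it preserves causal directions and commutes with $D$. Then, when computing $W(f\otimes h)+W(h\otimes f)$, the $\widetilde S_\Feyn$ contribution vanishes by the skew symmetry~\eqref{eq:skew}, and the remaining $S_\adv$ terms can be rewritten using (a), (b), and the identity $S_\adv^{*}=S_\ret$ to yield $\langle \Gamma h, S_\adv f\rangle=-\langle \Gamma f, S_\ret h\rangle$, so the sum collapses to $\rmi\langle \Gamma f,(S_\adv-S_\ret)h\rangle=\rmi\langle \Gamma f, Sh\rangle$, as required. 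Positivity $W(\Gamma f\otimes f)\ge 0$ is then straightforward: $\Gamma^{2}=\id$ reduces it to $-\rmi\langle f,(\widetilde S_\Feyn-S_\adv)f\rangle$, which is non-negative since $\widetilde S_\Feyn$ inherits the property $-\rmi(\widetilde S_\Feyn-S_\adv)\ge 0$ from the symmetrisation formula in Theorem~\ref{thm:Majorana} together with Theorem~\ref{thm:ISF_Dirac} (the transposition $-\rmi(\widetilde S_\Feyn^{T}-S_\adv^{T})=+\rmi(\widetilde S_\Feyn-S_\ret)\ge 0$ after identifying $S_\adv^{T}=S_\ret$ via formal self-adjointness).

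For the Hadamard condition, I would invoke the inclusion $\WFPrime{\widetilde S_\Feyn}\subset\WFPrime{\id}\cup C_\Feyn$ recorded in the proof of Theorem~\ref{thm:Majorana}, then observe that pre-composition with $\Gamma$ (a zeroth-order, indeed fibre-local, operator) does not enlarge the wave-front set, so $\WFPrime{W}\subset C_-$ follows from the known decomposition of the Feynman wave-front set into the positive and negative light-cone components and cancellation against $S_\adv$ on $C_\ret$. Finally, to obtain the alternative expression, I would mimic the bosonic argument of Lemma~\ref{lem:CtoA}: the identification of $\cM(D)$ as a subalgebra (or quotient) of $\cF(D)$ via a ``charge-conjugate doubling'' map $f\mapsto (f,\Gamma f)$ together with the formulas $W_{\Psi^+\Psi}(f\otimes u)=+\rmi\langle u^+,(S_\Feyn-S_\ret)f\rangle$ produces the second formula, and direct substitution and use of $W_{\Psi^+\Psi}+W_{\Psi\Psi^+}=\rmi\langle\,\cdot,S\,\cdot\rangle$ confirms agreement.

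The main obstacle I foresee is keeping track of signs and conjugation conventions in step~2, since the skew property of $\Gamma$ interacts with the conjugate-linear hermitian form and with the anti-self-adjointness $S^*=-S$ in a way that can easily produce a sign error; the Hadamard wave-front inclusion and the equivalence of the two formulas are, by contrast, essentially formal once the correct identities are in hand.
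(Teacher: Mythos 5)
Your proposal is correct and follows essentially the same route as the paper: the equation-of-motion clauses are immediate, the anticommutation identity is obtained from the skew-symmetry \eqref{eq:skew} of $\widetilde{S}_\Feyn$ together with $\langle\Gamma a,b\rangle=-\langle\Gamma b,a\rangle$, $[\Gamma,S_{\ret/\adv}]=0$ and $S_\adv^{*}=S_\ret$, positivity reduces via $\Gamma^2=\id$ to the two inequalities of Theorem~\ref{thm:ISF_Dirac}, and the alternative formula follows by direct substitution of $W_{\Psi^+\Psi}(f\otimes u)=\rmi\langle u^+,(S_\Feyn-S_\ret)f\rangle$ and averaging. The only blemish is a sign slip in your parenthetical (with the $\Gamma$-transpose one has $S_\adv^{\mathrm{T}}=-S_\ret$, not $S_\ret$), which does not affect the conclusion.
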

\begin{proof}
	The first and third parts of~\eqref{eq:Majorana_covariance} are clear, so it only remains to check the anticommutation relations and the equivalent formula. Using the property~\eqref{eq:skew} established in Theorem~\ref{thm:Majorana}, 
	$$
	W(h\otimes f) = -\rmi\langle \Gamma h,(S_\Feyn-S_\adv)f\rangle = 
	\rmi \langle \Gamma f,S_\Feyn h\rangle + \rmi \langle \Gamma h,S_\adv f\rangle =
	\rmi \langle \Gamma f,(S_\Feyn-S_\ret)h\rangle
	$$
	by skew-symmetry of $\Gamma$ and the fact that $S_\adv$ commutes with $\Gamma$ and has formal adjoint $S_\ret$. The anticommutation relations follow. 
	For the alternative formula, note that 
	$$
	W_{\Psi^+\Psi}(f\otimes (\Gamma h)^+ ) = \rmi\langle \Gamma h,(S_\Feyn-S_\ret)f\rangle = W(f\otimes h)
	$$
	by the equation above, while 
	$$
	\rmi \dlangle (\Gamma f)^+,Sh\drangle - W_{\Psi^+\Psi}(h\otimes (\Gamma f)^+) =
	\rmi\langle \Gamma f,Sh\rangle - \rmi \langle\Gamma f,(S_\Feyn-S_\ret)h\rangle = W(f\otimes h).
	$$
	The proof is completed by taking an average.
\end{proof}

\subsection*{Acknowledgements} 
We thank Elmar Schrohe for useful conversations.
The work of CJF is partly supported by EPSRC Grant EP/Y000099/1 to the University of York.  

For the purpose of open access, the authors have applied a creative commons attribution (CC BY) licence to any author accepted manuscript version arising.

\subsection*{Author declaration} The authors have no conflicts of interest to declare. The paper did not require ethical approval. The authors contributed equally and each more than 50\%.

\subsection*{Data availability statement} 

No data was generated for this paper.

\appendix
\section{Notational comparison with~\cite{IslamStrohmaier:2020} and~\cite{Fewster:2025a}}\label{appx:notation}
 
	For ease of comparison, we indicate the main similarities and differences between our conventions and notation and those of~\cites{IslamStrohmaier:2020,Fewster:2025a}.
	
	The sign conventions used here are mostly the same as in~\cite{Fewster:2025a}, except that we (like~\cite{IslamStrohmaier:2020}) use the conventional sign choice when defining the Fourier transform, while~\cite{Fewster:2025a} has a reversed sign in the exponent, thus reversing the sign of covectors in wavefront sets. Thus, for example, the Hadamard condition here for a two-point function of a hermitian field is $\WF{W}\subset \dot{T}_{0,-}M\times \dot{T}_{0,+}M$ rather than~\cite{Fewster:2025a}'s $\WF{W}\subset \dot{T}_{0,+}M\times \dot{T}_{0,-}M$. By contrast,~\cite{IslamStrohmaier:2020} uses a mostly-positive metric signature. Thus `musical isomorphisms' are related by a relative sign. Nonetheless, the definition of the principal symbol of an operator is common, as is the class of normally hyperbolic operators, because the minus sign in the relation between the principal symbol and metric, here and in~\cite{Fewster:2025a}, is absent in~\cite{IslamStrohmaier:2020}. Similarly, the definitions of Dirac type and definite type operators agree (neither~\cite{Fewster:2025a} nor~\cite{IslamStrohmaier:2020} discusses Majorana type operators). The advanced and retarded Green functions are defined in the same way as in~\cites{Fewster:2025a,IslamStrohmaier:2020}.
	
The term `formally self-adjoint' here corresponds to `formally hermitian' in~\cite{Fewster:2025a}.

Finally, we have followed~\cite{IslamStrohmaier:2020} in freely identifying densities and functions using the metric density, whereas~\cite{Fewster:2025a} maintains the distinction.


\begin{bibdiv}
\begin{biblist}

\bib{Baer:2015}{article}{
	AUTHOR = {B{\"a}r, C.},
	TITLE = {Green-hyperbolic operators on globally hyperbolic spacetimes},
	JOURNAL = {Comm. Math. Phys.},
	FJOURNAL = {Communications in Mathematical Physics},
	VOLUME = {333},
	YEAR = {2015},
	NUMBER = {3},
	PAGES = {1585--1615},
	ISSN = {0010-3616},
	MRCLASS = {35L30 (35L15 35L45 35R01 58J45)},
	MRNUMBER = {3302643},
	MRREVIEWER = {Herbert Koch},
	doi = {10.1007/s00220-014-2097-7},
	noURL = {http://dx.doi.org/10.1007/s00220-014-2097-7},
}

\bib{BaerGinoux:2012}{incollection}{
	AUTHOR = {B\"{a}r, C.},
	author = {Ginoux, N.},
	TITLE = {Classical and quantum fields on {L}orentzian manifolds},
	BOOKTITLE = {Global differential geometry},
	SERIES = {Springer Proc. Math.},
	VOLUME = {17},
	PAGES = {359--400},
	PUBLISHER = {Springer, Heidelberg},
	YEAR = {2012},
	MRCLASS = {81T08 (53C27 53C50 81T20)},
	MRNUMBER = {3289848},
	MRREVIEWER = {Kotik K. Lee},
	DOI = {10.1007/978-3-642-22842-1\_12},
	URL = {https://doi.org/10.1007/978-3-642-22842-1_12},
}

\bib{dergas2025prop}{article}{
      title={Propagators in curved spacetimes from operator theory}, 
      author={Dereziński, J.},
      author={Ga\ss, C.},
      year={2025},
      eprint={arXiv:2409.03279},
      archivePrefix={arXiv},
      primaryClass={math-ph},
      url={https://arxiv.org/abs/2409.03279}, 
}

\bib{DerezinskiGerard}{book}{
	AUTHOR = {Derezi\'nski, J.},
	author = {G\'erard, C.},
	TITLE = {Mathematics of quantization and quantum fields},
	SERIES = {Cambridge Monographs on Mathematical Physics},
	PUBLISHER = {Cambridge University Press, Cambridge},
	YEAR = {2013},
	PAGES = {xii+674},
	ISBN = {978-1-107-01111-3},
	MRCLASS = {81-02 (81S05 81S10 81T70)},
	MRNUMBER = {3060648},
	DOI = {10.1017/CBO9780511894541},
	URL = {https://doi.org/10.1017/CBO9780511894541},
}

\bib {DuiHoer_FIOii:1972}{article}{
	AUTHOR = {Duistermaat, J. J.},
	author = {H{\"o}rmander, L.},
	TITLE = {Fourier integral operators. {II}},
	JOURNAL = {Acta Math.},
	FJOURNAL = {Acta Mathematica},
	VOLUME = {128},
	YEAR = {1972},
	NUMBER = {3-4},
	PAGES = {183--269},
	ISSN = {0001-5962,1871-2509},
	MRCLASS = {58G15 (35S05 47G05)},
	MRNUMBER = {388464},
	MRREVIEWER = {Yu.\ V.\ Egorov},
	DOI = {10.1007/BF02392165},
	URL = {https://doi.org/10.1007/BF02392165},
}

\bib{Fewster:2025a}{article}{ 
		AUTHOR = {Fewster, C. J.},
		TITLE = {Hadamard states for decomposable {G}reen-hyperbolic operators},
		JOURNAL = {Comm. Math. Phys.},
		FJOURNAL = {Communications in Mathematical Physics},
		VOLUME = {407},
		YEAR = {2026},
		NUMBER = {1},
		PAGES = {Paper No. 14, 57},
		ISSN = {0010-3616,1432-0916},
		MRCLASS = {81T20 (58J45)},
		MRNUMBER = {4998897},
		DOI = {10.1007/s00220-025-05512-1},
		URL = {https://doi.org/10.1007/s00220-025-05512-1},
		eprint={arXiv:2503.12537},
	}
%

\bib{FullingSweenyWald:1978}{article}{
	AUTHOR = {Fulling, S. A.},
	author={Sweeny, M.},
	author={Wald, R. M.},
	TITLE = {Singularity structure of the two-point function quantum field
		theory in curved spacetime},
	JOURNAL = {Comm. Math. Phys.},
	FJOURNAL = {Communications in Mathematical Physics},
	VOLUME = {63},
	YEAR = {1978},
	NUMBER = {3},
	PAGES = {257--264},
	ISSN = {0010-3616,1432-0916},
	MRCLASS = {81M05 (83C45)},
	MRNUMBER = {513903},
	MRREVIEWER = {B.\ K.\ Nayak},
	URL = {http://projecteuclid.org/euclid.cmp/1103904566},
	doi = {10.1007/BF01196934},
}

\bib{Gerard}{book}{
	AUTHOR = {G\'erard, C.},
	TITLE = {Microlocal analysis of quantum fields on curved spacetimes},
	SERIES = {ESI Lectures in Mathematics and Physics},
	PUBLISHER = {EMS Publishing House, Berlin},
	YEAR = {[2019] \copyright 2019},
	PAGES = {viii+220},
	ISBN = {978-3-03719-094-4},
	MRCLASS = {81T20 (35A27 35Q40 58J40 58J47 81T28)},
	MRNUMBER = {3972066},
	DOI = {10.4171/094},
	URL = {https://doi.org/10.4171/094},
}

\bib{GerardOulghaziWrochna:2017}{article}{
	AUTHOR = {G\'erard, C.},
	author= {Oulghazi, O.},
	author= {Wrochna, M.},
	TITLE = {Hadamard states for the {K}lein-{G}ordon equation on
		{L}orentzian manifolds of bounded geometry},
	JOURNAL = {Comm. Math. Phys.},
	FJOURNAL = {Communications in Mathematical Physics},
	VOLUME = {352},
	YEAR = {2017},
	NUMBER = {2},
	PAGES = {519--583},
	ISSN = {0010-3616,1432-0916},
	MRCLASS = {83C47 (35L10 35Q75 35R01 81T20 83C45 83E30)},
	MRNUMBER = {3627405},
	MRREVIEWER = {Mark\ D.\ Roberts},
	DOI = {10.1007/s00220-017-2847-4},
	URL = {https://doi.org/10.1007/s00220-017-2847-4},
}

\bib{Guenther_AP_1988}{book}{
	title=		{Huygens' Principle and Hyperbolic Equations},
	author=		{G\"{u}nther, P.},
	isbn=		{978-0-12-307330-3},
	series=		{Prespective in Mathematics},
	volume=		{5},
		doi=		{https://doi.org/10.1016/C2013-0-10776-3},
	year=		{1988},
	publisher=	{Academic Press}, 
	address= 	{USA}
}

\bib{Hadamard_ActaMath_1908}{article}{
	author=		{Hadamard, J.},
	doi=		{10.1007/BF02415449},
	journal=	{Acta Math.},
	pages=		{333-380},
	publisher=	{Institut Mittag-Leffler},
	title=		{Th\'{e}orie des \'{e}quations aux d\'{e}riv\'{e}es partielles lin\'{e}aires hyperboliques et du probl\`{e}me de Cauchy},
	url=		{https://doi.org/10.1007/BF02415449},
	volume=		{31},
	year=		{1908}
}

\bib{Hadamard_Dover_2003}{book}{
	title=		{Lectures on {C}auchy's problem in linear partial differential equations},
	author=		{Hadamard, J.},
	isbn=		{},
	series=		{},
	volume=		{},
	doi=		{},
	year=		{2003},
	publisher=	{Dover}, 
	address=	{NY}
}

\bib{Hoermander_Nice_1970}{article}{
	AUTHOR=	{H\"{o}rmander, L.}, 
	Editor=	{},
	title=		{Linear Differential Operators},
	conference={
		title=	{Actes, Congr\'{e}s intern. math,},
		date=		{1970},
		Address=	{Nice, France}
	},
	Pages=		{121-133},
	Publisher=	{},
	Month=		{1-10 Sep.},
}

\bib{Hoermander_ActaMath_1971}{article}{
	author=		{H\"{o}rmander, L.},
	doi=		{10.1007/BF02392052},
	fjournal=	{Acta Mathematica},
	journal=	{Acta Math.},
	pages=		{79-183},
	title=		{Fourier integral operators. {I}},
	volume=		{127},
	year=		{1971}
}

\bib{Hoermander_Springer_2003}{book}{
 	author=		{H\"{o}rmander, L.},
 	title=		{The Analysis of Linear Partial Differential Operators {I}: Distribution Theory and Fourier Analysis},
 	series=		{Classics in Mathematics},
 	volume=		{},
 	edition=	{2},
 	publisher=	{Springer-Verlag},
 	address=	{Berlin, Heidelberg},
 	year=		{2003},
 doi=		{10.1007/978-3-642-61497-2},
 	url=		{https://www.springer.com/gb/book/9783540006626}
}

\bib{Hollands_CMP_2001}{article}{
 	author=	{Hollands, S.},
 	author=	{Wald, R. M.},
 	title=	{Local {W}ick polynomials and time ordered products of quantum fields in curved spacetime},
 	journal={Commun. Math. Phys.},
 	year=	{2001},
 	volume=	{223},
 	number=	{2},
 	pages=	{289-326},
 doi=		{10.1007/s002200100540},
 	url=	{http://dx.doi.org/10.1007/s002200100540},
 	eprint= {arXiv:0103074 [gr-qc]}
}

\bib{Hollands_CMP_2002}{article}{
 	author=	{Hollands, S.},
 	author=	{Wald, R. M.},
 	title=	{Existence of local covariant time ordered products of quantum fields in curved spacetime},
 	journal={Commun. Math. Phys.},
 	year=	{2002},
 	volume=	{231},
 	number=	{2},
 	pages=	{309-345},
  doi=		{10.1007/s00220-002-0719-y},
 	url=	{http://dx.doi.org/10.1007/s00220-002-0719-y},
 	eprint=	{arXiv:0111108 [gr-qc]}
}

\bib{IslamStrohmaier:2020}{article}{
	doi = {10.4310/CAG.241204020919},
	url = {https://doi.org/10.4310/CAG.241204020919},	
	author = {Islam, O.},
	author = {Strohmaier, A.},	
	keywords = {Analysis of PDEs (math.AP), Mathematical Physics (math-ph), Differential Geometry (math.DG), FOS: Mathematics, FOS: Mathematics, FOS: Physical sciences, FOS: Physical sciences, 35S30, 58J40, 81T20},	
	title = {On microlocalization and the construction of {F}eynman propagators for normally hyperbolic operators},
	journal = {Communications in Analysis and Geometry},
	volume={32},
	pages={1811-1883},
	eprint = {arXiv:2012.09767},
	year = {2024},	
	copyright = {arXiv.org perpetual, non-exclusive license}
}

\bib{Junker_RMP_1996}{article}{
	author=	{Junker, W.},
	title=	{Hadamard states, adiabatic vacua and the construction of physical states for scalar quantum fields on curved space-time},
	journal={Rev. Math. Phys.},
	volume=	{8},
	pages=	{1091-1159},
	year=	{1996},
	note=	{Erratum: Rev.Math.Phys. \textbf{14}, 511-517 (2002)}
}

\bib{Junker_AHP_2002}{article}{
	author=	{Junker, W.},
	author=	{Schrohe, E.},
	title=	{Adiabatic vacuum states on general space-time manifolds: Definition, construction, and physical properties},
	eprint=	{arXiv:0109010 [math-ph]}, 
	journal={Ann. Henri Poincar\'{e}},
	volume=	{3},
	pages=	{1113-1182},
	year=	{2002}
}

\bib{KayWald:1992}{article}{
	AUTHOR = {Kay, B. S.},
	AUTHOR = {Wald, R. M.},
	TITLE = {Theorems on the uniqueness and thermal properties of
		stationary, nonsingular, quasifree states on spacetimes with a
		bifurcate {K}illing horizon},
	JOURNAL = {Phys. Rep.},
	FJOURNAL = {Physics Reports. A Review Section of Physics Letters},
	VOLUME = {207},
	YEAR = {1991},
	NUMBER = {2},
	PAGES = {49--136},
	ISSN = {0370-1573,1873-6270},
	MRCLASS = {81T20 (46L60 81T05 83C47)},
	MRNUMBER = {1133130},
	MRREVIEWER = {Stephen\ J.\ Summers},
	DOI = {10.1016/0370-1573(91)90015-E},
	URL = {https://doi.org/10.1016/0370-1573(91)90015-E},
}

\bib{KhavkineMoretti:2015}{incollection}{
	AUTHOR = {Khavkine, I.},
	author = {Moretti, V.},
	TITLE = {Algebraic {QFT} in curved spacetime and quasifree {H}adamard
		states: an introduction},
	BOOKTITLE = {Advances in algebraic quantum field theory},
	SERIES = {Math. Phys. Stud.},
	PAGES = {191--251},
	PUBLISHER = {Springer, Cham},
	YEAR = {2015},
	ISBN = {978-3-319-21352-1; 978-3-319-21353-8},
	MRCLASS = {81T20},
	MRNUMBER = {3409590},
}

\bib{Lewandowski_JMP_2022}{article}{
	author=	{Lewandowski, M.},
	title=	{Hadamard states for bosonic quantum field theory on globally hyperbolic spacetimes},
	journal={J. Math. Phys.},
	volume=	{63},
	number=	{1},
	pages=	{013501},
	year=	{2022},
	eprint=	{arXiv:2008.13156 [math-ph]}
}

\bib{Moretti:2008}{article}{
	AUTHOR = {Moretti, V.},
	TITLE = {Quantum out-states holographically induced by asymptotic
		flatness: invariance under spacetime symmetries, energy
		positivity and {H}adamard property},
	JOURNAL = {Comm. Math. Phys.},
	FJOURNAL = {Communications in Mathematical Physics},
	VOLUME = {279},
	YEAR = {2008},
	NUMBER = {1},
	PAGES = {31--75},
	ISSN = {0010-3616,1432-0916},
	MRCLASS = {83C47 (81T20 83C30)},
	MRNUMBER = {2377628},
	MRREVIEWER = {Gerald\ Hofmann},
	noDOI = {10.1007/s00220-008-0415-7},
	URL = {https://doi.org/10.1007/s00220-008-0415-7},
	eprint         = {gr-qc/0610143},
	archivePrefix  = {arXiv},
}

\bib{Radzikowski_CMP_1996}{article}{
	author=		{Radzikowski, M. J.},
	journal=	{Commun. Math. Phys.},
	number=		{},
	pages=		{529-553},
	publisher=	{Springer},
	title=		{Micro-local approach to the {H}adamard condition in quantum field theory on curved space-time},
	url=		{http://projecteuclid.org/euclid.cmp/1104287114},
	volume=		{179},
	year=		{1996}
}

\bib{Riesz_ActaMath_1949}{article}{
	author=	{Riesz, M.},
	journal=	{Acta Math.},
	pages=		{1-222},
	publisher=	{Institut Mittag-Leffler},
	title=		{L'int\'{e}grale de Riemann-Liouville et le probl\`{e}me de {C}auchy},
	url=		{https://doi.org/10.1007/BF02395016},
	volume=		{81},
	year=		{1949}
}

\bib{Riesz_CPAM_1960}{article}{
	author=		{Riesz, M.},
	title=		{A geometric solution of the wave equation in space-time of even dimension},
	journal=	{Comm. Pure Appl. Math.},
	volume=		{13},
	number=		{3},
	pages=		{329-351},
	url=		{https://onlinelibrary.wiley.com/doi/abs/10.1002/cpa.3160130302},
	year=		{1960}
}

\bib{SahlmannVerch:2000RMP}{article}{
	author         = {Sahlmann, H.},
	author 			= {Verch, R.},
	title          = {{Microlocal spectrum condition and Hadamard form for
			vector valued quantum fields in curved space-time}},
	journal        = {Rev. Math. Phys.},
	volume         = {13},
	pages          = {1203-1246},
	doi           = {10.1142/S0129055X01001010},
	year           = {2001},
	eprint         = {math-ph/0008029},
}

\bib{MR1741419}{book}{
   author={Schaefer, H. H.},
   author={Wolff, M. P.},
   title={Topological vector spaces},
   series={Graduate Texts in Mathematics},
   volume={3},
   edition={2},
   publisher={Springer-Verlag, New York},
   date={1999},
   pages={xii+346},
}

\bib{MR3611021}{article}{
   author={Vasy, A.},
   title={On the positivity of propagator differences},
   journal={Ann. Henri Poincar\'e},
   volume={18},
   date={2017},
   number={3},
   pages={983--1007},
   issn={1424-0637},
   review={\MR{3611021}},
   doi={10.1007/s00023-016-0527-0},
}

\bib{VasyWrochna:2018}{article}{
	AUTHOR = {Vasy, A.},
	author = {Wrochna, M.},
	TITLE = {Quantum fields from global propagators on asymptotically
		{M}inkowski and extended de {S}itter spacetimes},
	JOURNAL = {Ann. Henri Poincar\'e},
	FJOURNAL = {Annales Henri Poincar\'e. A Journal of Theoretical and
		Mathematical Physics},
	VOLUME = {19},
	YEAR = {2018},
	NUMBER = {5},
	PAGES = {1529--1586},
	ISSN = {1424-0637,1424-0661},
	MRCLASS = {58J45 (35L10 35R01 81Q05 81T20 83C47)},
	MRNUMBER = {3784921},
	MRREVIEWER = {Jonathan\ Michael\ Blackledge},
	DOI = {10.1007/s00023-018-0650-1},
	URL = {https://doi.org/10.1007/s00023-018-0650-1},
}
\end{biblist}
\end{bibdiv}

\end{document}